\newtheorem{thm}{Theorem}
\newtheorem{rem}[thm]{Remark}
\newtheorem{defn}[thm]{Definition}
\providecommand{\keywords}[1]{\textbf{\textit{Key words: }} #1}
\def\ZZ{\mathbb{Z}}
\def\RR{\mathbb{R}}
\def\cal{\mathcal}
\def\bf{\mathbf}
\def\TrapGen{\mathsf{TrapGen}}
\def\SK{\mathsf{SK}}
\def\PP{\mathsf{PP}}
\def\MSK{\mathsf{MSK}}
\def\SKS{\mathsf{sk_s}}
\def\CT{\mathsf{CT}}
\def\SampleLeft{\mathsf{SampleLeft}}
\def\SampleRight{\mathsf{SampleRight}}
\def\SampleBasisLeft{\mathsf{SampleBasisLeft}}
\def\SampleBasisRight{\mathsf{SampleBasisRight}}
\def\td{\mathsf{td}}
\def\Setup{\mathsf{Setup}}
\def\Enc{\mathsf{Enc}}
\def\Extract{\mathsf{Extract}}
\def\Ext{\mathsf{Ext}}
\def\Dec{\mathsf{Dec}}
\def\Td{\mathsf{Td}}
\def\Test{\mathsf{Test}}
\def\Pr{\mathrm{Pr}}
\def\Adv{\mathsf{Adv}}
\def\OW{\textsf{OW-ID-CPA}}
\def\IND{\textsf{IND-ID-CPA}}
\def\a{\alpha}
\def\u{\bf{u}}
\def\v{\bf{v}}
\def\e{\bf{e}}
\def\d{\bf{d}}
\def\m{\bf{m}}
\def\x{\bf{x}}
\def\L{\Lambda}
\def\Lp{\Lambda^{\perp}}
\def\b{\bf{b}}
\def\s{\bf{s}}
\def\c{\bf{c}}
\def\ID{\mathsf{ID}}
\def\id{\mathsf{id}}
\def\W{\cal{W}}
\begin{document}
	
	\title{Lattice-based IBE with Equality Test Supporting Flexible Authorization in the Standard Model}
	\titlerunning{Lattice-based IBEET-FA in the Standard Model}
	
	\authorrunning{G. L. D. Nguyen, W. Susilo, D. H. Duong, H. Q. Le, F. Guo}
	\author{Giang L. D. Nguyen\inst{3}, Willy Susilo\inst{1}, Dung Hoang Duong\inst{1}, Huy Quoc Le\inst{1,2}\and Fuchun Guo\inst{1}}
	\institute{
		Institute of Cybersecurity and Cryptology, School of Computing and Information Technology, University of Wollongong\\
		Northfields Avenue, Wollongong NSW 2522, Australia\\
		\email{\{hduong,wsusilo,fuchun\}@uow.edu.au},
		\email{qhl576@uowmail.edu.au}
		\and
		CSIRO Data61, Sydney, NSW, Australia\\
		\and
		Futurify Software Development\\
		17 Street No. 2, Ward 4, District 3, Ho Chi Minh City, Vietnam\\
		\email{ndlgiang.edu@gmail.com}	
	}
	\maketitle
	
	\begin{abstract}
		Identity-based encryption with equality test supporting flexible authorization (IBEET-FA) allows the equality test of underlying messages of two ciphertexts while strengthens privacy protection by allowing users (identities) to control the comparison of their ciphertexts with others. IBEET by itself has a wide range of useful applicable domain such as keyword search on encrypted data, database partitioning for efficient encrypted data management, personal health record systems, and spam filtering in encrypted email systems. The flexible authorization will enhance privacy protection of IBEET. In this paper, we propose an efficient construction of IBEET-FA system based on the hardness of learning with error (LWE) problem. Our security proof holds in the standard model.
	\end{abstract}
	
	\keywords{identity-based encryption, equality test, flexible authorization, lattice-based cryptography,  Learning with Errors.}
	
	\section{Introduction}
	The concept of identity-based encryption with equality test supporting flexible authorization (IBEET-FA) is the combination of public key encryption with equality test supporting flexible authorization (PKEET-FA) and identity-based encryption (IBE). IBE is a public key encryption mechanism where a public key can be an arbitrary string such as email address or phone number. The administrator who has knowledge of master secret key then can generate the corresponding private key by using Private-Key Generator (PKG). IBE by itself can simplify the certificate management of public key encryption. Identity-based encryption scheme with equality test (IBEET) was first introduced by Ma~\cite{IBEET-FA-Ma16}. It supports the equality test of underlying message of two ciphertext with the help of trapdoors. Hence IBEET has found a wide range of applications especially in smart city applications \cite{WZCH18}, in cloud computing such as partition of encrypted emails ~\cite{Ma16}, malware detection and verifiability of encrypted data \cite{AAHM18}, and in  wireless body area networks~\cite{IBEET-RSA}. The flexible authorization (FA) notion was first introduced by Ma et al.  ~\cite{PKEET-FA-Maetal-16}. It enhances the privacy protection of data owners by allowing the users (identities) more choices  in controlling of what they want to compare, either between identities' ciphertexts, between two ciphertexts or between a specific ciphertext and other identity's ciphertexts.
	
	One particular application scenario of using IBEET in smart city is in smart healthcare. Assume that, each family in the smart city has a smart device for monitoring their daily health, such as sleeping time, heart beat rate, how many steps a person walk a day, and so on. Their data is encrypted and stored in the cloud sever managed by the government and the encrypted data can be accessed by the healthcare department from the government. It is a common request that the healthcare officers need to get some residents' health information in order to monitor the community's health. The officers then can perform equality test for specific keywords on encrypted data stored in the cloud server. For this scenario, we can see that Type-1 is needed since  this type of authorization allows the officers to do equality test on all ciphertexts from the users in the community.
	
	Let consider the case  in which some third party, such as a medicine company, can harm the privacy if the system (comprising of smart healthcare devices) is only equipped with the Type-1 authorization.  Suppose that, the medicine company wishes to know the body mass index (BMI) of people in the community for their new kind of weight loss pill.  Obviously, the advertising company can encrypt a specific keyword relating to the BMI (e.g., the key word could be  `` obese" ) using any or all public key(s) of users in the community. Therefore, if the  advertising companies are given the Type-1 method,  the company can perform equality test on all available ciphertexts in the database, which is not good for the residents' privacy.   However, it is a standard privacy policy by law that users can opt-in or opt-out for using of their data.  Then, in this case, the users can choose the Type-2 and/or Type-3 authorization which definitely help to enhance the privacy protection.
	
	Note that in the identity-based encryption (IBE), we have a reduction from the IND security to the OW security (see \cite[Table 1]{GH05}). However, note that in the case of IBEET (hence, IBEET-FA), we do not always have the reduction. The reason for this is that, an OW adversary is able to get the trapdoor according to the target user, while an IND adversary is not allowed to get the target user's trapdoor. Then both IND and OW should be simultaneously required for the IBEET-FA. Moreover, choosing CPA or CCA, CCA2 security depends on the real-life applications to which the IBEET-FA is applied. Recall that, the CPA security ensures that, even if the attacker can get ciphertexts for arbitrary plaintexts that it wants, the  attacker's advantage in  revealing all or part of the secret encryption key is negligible. While the CCA and CCA2 securities guarantee that, even if the attackers can obtain the ciphertexts and the corresponding plaintexts, their advantage in attacks is still negligible. For IBEET and IBEET-FA, it is folkore to consider the security model which involves \IND~(indistinguishability under chosen-identity and under chosen-plaintext attacks) and \OW~(one-wayness under chosen-identity and under chosen-plaintext attacks) into account.\\
		
	\noindent \textbf{Related Works.}	 
	There have been a lot of works involving with PKEET and IBEET such as Lee et al. \cite{Lee2016}, Wu et al. \cite{WZCH18} and Duong et al. \cite{IBEET-Duong19} just to name a few. 	Lee et al. \cite{Lee2016} inherited knowledge from the existing proposals proven secure under the hardness of number theoretic assumptions and in the random oracle (ROM). The authors  introduced a generic construction for PKEET (and also for IBEET) which is secure  in the standard model (SDM) being able to be instantiated in the lattice setting. Their constructions based on two-level hierarchical identity-based, a strong unforgeable one time signature and a cryptographic hash function. Wu et al. \cite{WZCH18}  proposed an efficient
	IBEET scheme secure in the random oracle model (ROM) replying on the Computational Diffie-Hellman (CDH) assumption. Duong et al.~\cite{IBEET-Duong19} has introduced an efficient direct construction of an IBEET based on lattices which is secure in the standard model (SDM). Their method exploits the adaptive identity-based encryption (Full-IBE) scheme which is a post-quantum instantiation based on lattices proposed by ~\cite{ABB10-EuroCrypt}.
		
	Regarding the encryptions that support equality test with FA, there have been a few of works in the literature, e.g., 	Ma et al.  ~\cite{PKEET-FA-Maetal-16}, Duong et al. \cite{PKEET-FA-Duong20}.  As the first PKEET-FA, Ma et al. \cite{PKEET-FA-Maetal-16} enhanced the original PKEET scheme which lacks of a solution for users to control the comparison of their ciphertexts with others' by adding authorization mechanism. The authors also introduced new definitions, security policies, and a PKEET construction supporting four type of authorization proven under the CDH assumption and in the ROM. Recently, Duong et al. \cite{PKEET-FA-Duong20} have studied Ma et al. works and proposed the first quantum-safe PKEET supporting flexible authorization  which are secure in the SDM based on the integer lattice and ideal lattice hard problems.
	
			 	 
\begin{table}[h]
  	\centering
  	 \caption{ A summary of some PKEET  and IBEET with/without flexible authorization.  Here DBDH stands for the decisional
  		Billinear Diffie-Hellman assumption. }
  	\label{tab2}
  	\small\addtolength{\tabcolsep}{0pt}
  	\begin{tabular}{ c | c| c|c|c}
  	\hline
  		\textbf{Literature}& \textbf{PKEET/IBEET}&\textbf{Hard Assumption}&\textbf{Security Model}&\textbf{with FA? }\\
\hline
	Ma et al. \cite{PKEET-FA-Maetal-16}&PKEET&CDH&ROM& $\checkmark$\\
		\hline
	Lee et al. \cite{Lee2016}&PKEET \& IBEET& \makecell{CDH+DBDH \\or LWE}&SDM&$\times$\\
		\hline	
			Wu et al. \cite{WZCH18}&IBEET&CDH&ROM&$\times$\\
\hline	
	Duong et al. \cite{IBEET-Duong19}&IBEET&LWE&SDM&$\times$\\				\hline
Duong et al. \cite{PKEET-FA-Duong20}&PKEET&LWE&SDM&$\checkmark$\\
\hline
\hline
\textbf{This work}&IBEET&LWE&SDM&$\checkmark$\\
	 	\hline
  	\end{tabular} 

\end{table}
	We summarize some related works in Table \ref{tab2} below. According to the best of our knowledge, there have not been any post-quantum secure IBEET that supports a flexible authorization mechanism. Therefore, it is necessary to construct an identity-based encryption with equality test supporting flexible authorization in the standard model and secure in quantum era.
	
	\noindent\textbf{Our contribution.}
	We propose the first concrete construction of an IBEET-FA scheme secure in the standard model based on the hardness assumption of lattices. In particular, our IBEET-FA construction offers the $\IND$ and $\OW$ security. However, we emphasize that our scheme can achieve CCA2 using  the BCHK's transformation~\cite{BCHK07} with the cost of increasing ciphertext size.\\
	
	\begin{table}[pt]	
		\caption{Comparison of our \textsf{IBEET-FA} with other IBEET constructions. Data sizes are in number of field elements. Here $t$ is the length of messages, $\ell$ is the length of identity.}
		
		\centering
		\medskip
		\smallskip
		\small\addtolength{\tabcolsep}{-1pt}
		\begin{tabular}{ c | c | c | c | c }
			\hline
			\textbf{Scheme}&\textbf{Ciphertext}&\textbf{Public key}&\textbf{Master SK}&\textbf{Secret key}\\
			\hline\hline
			Our proposed IBEET-FA& $m^2 + 2t + 6m + \lambda$ &$(\ell+3)mn + nt$ &$2m^2$ &$4m^2$\\
			IBEET of Duong et al. \cite{IBEET-Duong19}& $2t + 4m$ &$(\ell+3)mn + nt$ &$2m^2$ &$4mt$\\
			IBEET of Lee et al. \cite{Lee2016}& $8m + 2t +2mt$ &$(\ell+3)mn + nt$ &$2m^2$ &$2mt$\\
			\hline
		\end{tabular} 
		
		\label{tab3}
	\end{table}
	
	\noindent\textbf{Technical Overview.} 
	Our idea is from the works of Duong et al.~\cite{IBEET-Duong19,PKEET-FA-Duong20}. Recall that, the PKEET-FA of Duong et al.~\cite{PKEET-FA-Duong20} exploits the full-IBE construction by Agrawal et al.~\cite{ABB10-EuroCrypt}. In particular:
	\begin{itemize}
		\item Each user has their own pair of (PK, SK).
		\item In \textsf{Encrypt} method, they choose a vector tag $\b=(b_1,\cdots,b_\ell)\in\{-1,1\}^\ell$ and then compute $F_1:=(A|B + \sum_{i=1}^\ell b_iA_i)$. That means each user can produce many ciphertexts, each of which contains one different vector tag $\b$.
		\item The vector tag $\b$ is then combined with the trapdoor (which is the second part of the secret key SK) for flexible authorization methods.
	\end{itemize}
	
	Although our IBEET-FA work is inspired from \cite{PKEET-FA-Duong20} and \cite{ABB10-EuroCrypt}, we face the following obstacles:
	\begin{itemize}
		\item The whole system only has one pair of (MPK, MSK).
		\item Each user has their own $\ID=(\id_1,\cdots,\id_\ell)\in\{-1,1\}^\ell$, which is used to compute the matrix $F_\ID=(A|B + \sum_{i=1}^\ell\id_iA_i)$ in \textsf{Extract} algorithm to produce their secret key from MSK. Each user can produce multiple ciphertexts. However, those all ciphertexts contain the same \textsf{ID} of the user. That's why the \textsf{ID} cannot be used as the role of vector tag $\b$. 
		\item We need a new tag (which can be a random vector or matrix) to play the role of the vector tag $\b$. This tag will be appended to $F_\ID$ and be stored in ciphertext for flexible authorization later.
	\end{itemize}

	
	Therefore, in order  to support FA with many different types, we need to provide additional tag which is random and specified for each ciphertext. Namely, we sample uniformly at random a matrix $R$ (called  \textit{matrix tag}) and build the matrix $F=(A|B + \sum_{i=1}^\ell\id_iA_i|AR)$ which will be used during  the encryption algorithm. By using the matrix $AR$ like this, we can also construct a reduction from the learning with errors (LWE) problem to the $\IND$ (and $\OW$) security of the proposed IBEET-FA. If we replace $AR$ by some random matrix, the reduction will not work. In summary, we use the matrix $AR$ for two purposes:
	\begin{itemize}
		\item \textbf{P1:} To combine with trapdoor for different Type-$\alpha$ flexible authorisation.
		\item \textbf{P2:} For the reduction from LWE to the $\IND$ and to $\OW$ security to work.
	\end{itemize}
	
The ciphertext in our scheme has this form $\CT = (R,\c_1,\c_2,\c_3,\c_4,\c_5)$, where $R$ is the matrix tag to support flexible authorization, $(\c_1, \c_3)$ is the encryption of message \textbf{m} and $(\c_2, \c_4)$ is the encryption of hash value H(\textbf{m}). Both $R$ and $(\c_2, \c_4)$ are used to support flexible authorization. Here $\c_5=H'(\c_1||\c_2||\c_3||\c_4)$ is used to strengthen the FA procedures by checking  the consistence of ciphertexts before executing equality test. 
		
	We present in Table \ref{tab3} a comparison of our \textsf{IBEET-FA} with the lattice-based IBEETs of  Lee et al. \cite{Lee2016} and Duong et al.  \cite{IBEET-Duong19}. \\

\noindent \textbf{The weakness of the existing security model for IBEET and IBEET-FA.} We remark that the component $\c_5$ in the ciphertext of our IBEET-FA scheme is only used to  check the consistence of ciphertexts before executing equality test. Without $\c_5$ (i.e., assume that $\CT = (R,\c_1,\c_2,\c_3,\c_4)$),  our scheme still enjoys the $\IND$ and $\OW$ security. However, for real-life applications, our construction with $\CT = (R,\c_1,\c_2,\c_3,\c_4)$ may suffer from the following attack. 
Suppose that we use equality test on encrypted data for a research of counting the number of the obese people, each ciphertext corresponds to each person. Then, if an adversary can create malformed ciphertexts from some original ciphertext by keeping $(R, \c_2,\c_4)$ but replacing the pair of $(\c_1,\c_3)$, then the database would be changed in the number of ciphertexts having the same underlying message, which results in the wrong output of the research. Obviously, this also means that the existing security model for IBEET and IBEET-FA is not strong enough in some manners.
	
	\section{Preliminaries}\label{sec:prelim}
	
	\subsection{Identity-based encryption with equality test supporting flexible authorization (IBEET-FA)}
	In this section, we will recall the model of IBEET-FA and its security model.
	\begin{defn}\label{def:PKEET-FA}
		An IBEET-FA scheme consists of the following polynomial-time algorithms:
		\begin{itemize}
			\item $\Setup(\lambda)$: On input a security parameter $\lambda$, it outputs a public parameter $\PP$ and a master secret key $\MSK$.
			\item $\Extract(\PP,\MSK,\ID)$: On input $\PP, \MSK$ and an identity $\ID$, it outputs an identity $\ID$'s secret key $\SK_{\ID}$.
			\item $\Enc(\PP,\ID,\bf{m})$: On input  $\PP$, an identity $\ID$ and a message $\bf{m}$, it outputs a ciphertext $\CT$.
			\item $\Dec(\PP,\SK_{\ID},\CT)$: On input $\PP$, an identity $\ID$'s secret key $\SK_\ID$ and a ciphertext $\CT$, it outputs a message $\bf{m}'$ or $\perp$.
			
			Note that $\PP$ contain the information of the message space $\cal{M}$. It is used in other algorithms implicitly.
		\end{itemize}  
		
		Suppose that an identity $\ID_{i}$ has secret key $\SK_{i}$, whose ciphertext is $\CT_{i}$, and an identity $\ID_{j}$ has secret key $\SK_{j}$, whose ciphertext is $\CT_{j}$ respectively. The Type-$\alpha$ ($\alpha = 1, 2, 3$) authorization for $\ID_{i}$ and $\ID_{j}$ consist of two algorithms:
		\begin{itemize}
			\item[$\bullet$] $\Td_{\alpha}$ ($\alpha = 1, 2, 3$): an algorithm to generate trapdoors for $\ID_{i}$’s ciphertexts that are required to be compared with $\ID_{j}$’s ciphertexts.
			\item[$\bullet$] $\Test_{\alpha}$ ($\alpha = 1, 2, 3$): an algorithm to check and compare whether two identities’ ciphertexts contain the same underlying message or not.
		\end{itemize}
		
		Type-1 Authorization:
		\begin{itemize}
			\item[$\bullet$] $\Td_1(\SK_{\ID})$: On input the secret key $\SK_{\ID}$ of the identity $\ID$, it outputs a trapdoor $\td_{1,\ID}$ for that identity.
			\item[$\bullet$] $\Test_1(\td_{1,\ID_i},\td_{1,\ID_j},\CT_{\ID_i},\CT_{\ID_j})$: On input trapdoors $\td_{1,\ID_i}, \td_{1,\ID_j}$ and ciphertexts $\CT_{\ID_i}, \CT_{\ID_j}$ for identities $\ID_i, \ID_j$ respectively, it outputs $1$ or $0$.
		\end{itemize}
		
		Type-2 Authorization:			
		\begin{itemize}
			\item[$\bullet$] $\Td_2(\SK_{\ID}, \CT_{\ID})$: On input the secret key $\SK_{\ID}$ and ciphertext $\CT_{\ID}$ of the identity $\ID$, it outputs a trapdoor $\td_{2,\ID}$ for $\CT_{\ID}$ of that identity.
			\item[$\bullet$] $\Test_2(\td_{2,\ID_i},\td_{2,\ID_j},\CT_{\ID_i},\CT_{\ID_j})$: On input trapdoors $\td_{2,\ID_i}, \td_{2,\ID_j}$ and ciphertexts $\CT_{\ID_i}, \CT_{\ID_j}$ for identities $\ID_i, \ID_j$ respectively, it outputs $1$ or $0$.
		\end{itemize}
		
		Type-3 Authorization:
		\begin{itemize}			
			\item[$\bullet$] $\Td_3(\SK_{\ID})$: On input the secret key $\SK_{\ID}$ of the identity $\ID$, it outputs a trapdoor $\td_{3,\ID}$ for identity $\ID$.
			\item[$\bullet$] $\Td_3(\SK_{\ID}, \CT_{\ID})$: On input the secret key $\SK_{\ID}$ and ciphertext $\CT_{\ID}$ of the identity $\ID$, it outputs a trapdoor $\td_{3,\ID}$ for $\CT_{\ID}$ of identity $\ID$.
			\item[$\bullet$] $\Test_3(\td_{3,\ID_i},\td_{3,\ID_j},\CT_{\ID_i},\CT_{\ID_j})$: On input trapdoors $\td_{3,\ID_i}, \td_{3,\ID_j}$ and ciphertexts $\CT_{\ID_i}, \CT_{\ID_j}$ for identities $\ID_i, \ID_j$ respectively, it outputs $1$ or $0$.
		\end{itemize}    
	\end{defn}
	
	\noindent\textbf{Correctness.} We say that an IBEET-FA scheme is \textit{correct} if the following conditions hold:
	\begin{description}
		\item[(1)] For any security parameter $\lambda$, any identity $\ID$ and any message $\bf{m}$:
		$$\Pr\left[ {\begin{gathered}
				\Dec(\PP,\SK_{\ID},\CT_{\ID})=\bf{m}\end{gathered}  
			\left| \begin{gathered}
				\SK_{\ID}\gets\Extract(\PP,\MSK,\ID)\\
				\CT_{\ID}\gets\Enc(\PP,\ID,\bf{m})
			\end{gathered}  \right.} \right]=1.$$
		\item[(2)] For any security parameter $\lambda$, any identities $\ID_i$, $\ID_j$ and any messages $\bf{m}_i, \bf{m}_j$, it holds that:    
		$$\Pr\left[{
			\Test\left( \begin{gathered}
				\td_{\alpha, \ID_i} \\
				\td_{\alpha, \ID_j} \\
				\CT_{\ID_i} \\
				\CT_{\ID_j} \\ 
			\end{gathered}  \right) = 1\left| \begin{array}{l}
				\SK_{\ID_i}\gets\Extract(\PP,\MSK,\ID_i) \\
				\CT_{\ID_i}\gets\Enc(\PP,\ID_i,\bf{m}_i) \\
				\td_{\alpha, \ID_i}\gets\Td_\alpha () \\
				\SK_{\ID_j}\gets\Extract(\PP,\MSK,\ID_j) \\
				\CT_{\ID_j}\gets\Enc(\PP,\ID_j,\bf{m}_j) \\
				\td_{\alpha, \ID_j}\gets\Td_\alpha () 
			\end{array}  \right.} \right]=1.$$
		if $\bf{m}_i=\bf{m}_j$, regardless $i$ and $j$ value, where $\alpha = 1, 2, 3$.
		\item[(3)] For any security parameter $\lambda$, any identities $\ID_i$, $\ID_j$ and any messages $\bf{m}_i, \bf{m}_j$, it holds that:    
		$$\Pr\left[{
			\Test\left( \begin{gathered}
				\td_{\alpha, \ID_i} \\
				\td_{\alpha, \ID_j} \\
				\CT_{\ID_i} \\
				\CT_{\ID_j} \\ 
			\end{gathered}  \right) = 1\left| \begin{array}{l}
				\SK_{\ID_i}\gets\Extract(\PP,\MSK,\ID_i) \\
				\CT_{\ID_i}\gets\Enc(\PP,\ID_i,\bf{m}_i) \\
				\td_{\alpha, \ID_i}\gets\Td_\alpha () \\
				\SK_{\ID_j}\gets\Extract(\PP,\MSK,\ID_j) \\
				\CT_{\ID_j}\gets\Enc(\PP,\ID_j,\bf{m}_j) \\
				\td_{\alpha, \ID_j}\gets\Td_\alpha () 
			\end{array}  \right.} \right]=negligible(\lambda)$$
		if for any ciphertexts $\CT_{\ID_i}$, $\CT_{\ID_j}$ and $\Dec(\PP,\SK_{\ID_i},\CT_{\ID_i})\ne\Dec(\PP,\SK_{\ID_j},\CT_{\ID_j})$, regardless of whether $i=j$, where $\alpha = 1, 2, 3$.
	\end{description}
	
	\noindent\textbf{Security model of IBEET-FA.}
	For the IBEET-FA security model, we consider two types of adversaries:
	\begin{itemize}
		\item[$\bullet$] Type-I adversary: the adversaries can perform authorization (equality) tests on the challenge ciphertext by requesting to obtain a trapdoor for authorization of the target identity. Their goal is to reveal the plain text message in the challenge ciphertext.
		\item[$\bullet$] Type-II adversary: the adversaries cannot perform authorization (equality) tests on the challenge ciphertext since they cannot obtain a trapdoor for authorization of the target identity. Their goal is to distinguish which message is in the challenge ciphertext between two candidates.
	\end{itemize}
	
	Note that the Type-3 authorization is actually a combination of Type-1 and Type-2 authorizations. Hence the Type-3 authorization queries are intentionally omitted. For simplicity, we only provide Type-$\alpha$ $(\alpha = 1, 2)$ authorization queries to the adversary in security games. The security model of a IBEET-FA scheme against two types of adversaries above is described as follows.\\
	
	\noindent\textbf{OW-ID-CPA security against Type-I adversaries.}
	We present the game between a challenger $\cal{C}$ and a Type-I adversary $\cal{A}$. Let say $\cal{A}$ wants to attack target identity $\ID^*$ and he can have a trapdoor for all ciphertexts of that identity.
	\begin{enumerate}
		\item \textbf{Setup:} $\cal{C}$ runs $\Setup(\lambda)$ to generate the pair $(\PP,\MSK)$, and sends the public parameter $\PP$ to $\cal{A}$.
		\item \textbf{Phase 1:} $\cal{A}$ may adaptively make queries many times and in any order to the following oracles:
		\begin{itemize}
			\item $\cal{O}^{\Ext}$: on input identity $\ID$ ($\ne \ID^*$), $\cal{O}^{\Ext}$ returns the $\ID$'s secret key $\SK_{\ID}$.
			\item $\cal{O}^{\Td_\alpha}$ for $\alpha = 1, 2$:
			\begin{itemize}
				\item[$\bullet$] oracle $\cal{O}^{\Td_1}$ return $\td_{1, \ID}\gets\Td_1(\SK_{\ID})$ using the secret key $\SK_{\ID}$.
				\item[$\bullet$] oracle $\cal{O}^{\Td_2}$ return $\td_{2, \ID}\gets\Td_2(\SK_{\ID}, \CT_{\ID})$ using the secret key $\SK_{\ID}$ and $\CT_\ID$.
			\end{itemize}
		\end{itemize}
		\item \textbf{Challenge:} $\cal{C}$ chooses a random message $\bf{m}$ in the message space and sends $\CT^*_{\ID^*}\gets\Enc(\PP,\ID^*,\bf{m})$ to $\cal{A}$.
		\item \textbf{Phase 2:} same as in Phase $1$ except that $\cal{A}$ cannot be queried to the secret key extraction oracle $\cal{O}^{\Ext}$ for the identity $\ID^*$.
		\item \textbf{Guess:} $\cal{A}$ output $\bf{m}'$.
	\end{enumerate}
	The adversary $\cal{A}$ wins the above game if $\bf{m}=\bf{m}'$ and the success probability of $\cal{A}$ is defined as
	$$\Adv^{\OW, Type-\alpha}_{\cal{A},\text{IBEET-FA}}(\lambda):=\Pr[\bf{m}=\bf{m}'].$$
	
	\begin{rem}
		It is necessary to have the size of the message space $\cal{M}$ to be exponential in the security parameter and the min-entropy of the message distribution is sufficiently higher than the security parameter. If not, a Type-I adversary $\cal{A}$ can perform the equality tests with the challenge ciphertext and all other ciphertexts from messages space generated the adversary himself by using the trapdoor for the challenge ciphertext. Then he can reveal the underlying message of the challenge ciphertext in polynomial-time or sufficiently small exponential time in the security parameter. For instance, an algorithm is said to be exponential time, if $T(n)$ is upper bounded by $2^{poly(n)}$, where $poly(n)$ is some polynomial in n. More informally, an algorithm is exponential time if $T(n)$ is bounded by $O(2^{n^k})$. If both $n$ and $k$ are small, then the total time is still sufficiently acceptable.
	\end{rem}
	
	\noindent\textbf{IND-ID-CPA security against Type-II adversaries.}
	We illustrate the game between a challenger $\cal{C}$ and a Type-II adversary $\cal{A}$, who cannot have a trapdoor for all ciphertexts of target identity $\ID^*$.
	\begin{enumerate}
		\item \textbf{Setup:} $\cal{C}$ runs $\Setup(\lambda)$ to generate $(\PP,\MSK)$ and gives the public parameter $\PP$ to $\cal{A}$.
		\item \textbf{Phase 1:} $\cal{A}$ may adaptively make queries many times and in any order to the following oracles:
		\begin{itemize}
			\item $\cal{O}^{\Ext}$: on input identity $\ID$ ($\ne \ID^*$), returns the $\ID$'s secret key $\SK_{\ID}$.
			\item $\cal{O}^{\Td_\alpha}$ for $\alpha = 1, 2$ (where input $\ID \ne \ID^*$):
			\begin{itemize}
				\item[$\bullet$] oracle $\cal{O}^{\Td_1}$: on input $\ID$, $\cal{O}^{\Td_1}$ returns $\td_{1, \ID}\gets\Td_1(\SK_{\ID})$.
				\item[$\bullet$] oracle $\cal{O}^{\Td_2}$: on input ($\ID, \CT_{\ID}$), $\cal{O}^{\Td_2}$ returns $\td_{2, \ID}\gets\Td_2(\SK_{\ID}, \CT_{\ID})$.
			\end{itemize}
		\end{itemize}
		
		\item \textbf{Challenge:} $\cal{A}$ selects a target user $\ID^*$, which was never queried to the $\cal{O}^{\Ext}$ and  $\cal{O}^\Td$ oracles in Phase 1, and two messages $\bf{m}_0$ $\bf{m}_1$ of same length and pass to $\cal{C}$, who then selects a random bit $b\in\{0,1\}$, runs $\CT^*_{\ID^*, b}\gets\Enc(\PP,\ID^*,\bf{m}_b)$ and sends $\CT^*_{\ID^*,b}$ to $\cal{A}$.
		
		
		\item \textbf{Phase 2:} same as in Phase $1$ except that $\cal{A}$ cannot query $\cal{O}^{\Ext}$ and $\cal{O}^{\Td_\alpha}$ for the identity $\ID^*$.
		\item \textbf{Guess:} $\cal{A}$ output $b'$.
	\end{enumerate}
	The adversary $\cal{A}$ wins the above game if $b=b'$ and the advantage of $\cal{A}$ is defined as
	$$\Adv_{\cal{A},\text{IBEET-FA}}^{\IND, Type-\alpha}:=\left|\Pr[b=b']-\frac{1}{2}\right|.$$

	\subsection{Lattices}
	We mainly focus on integer lattices, namely discrete subgroups of $\ZZ^m$. Specially, a lattice $\Lambda$ in $\ZZ^m$ with basis $B=[\b_1,\cdots,\b_n]\in\ZZ^{m\times n}$, where each $\b_i$ is written in column form, is defined as
	$$\Lambda:=\left\{\sum_{i=1}^n\b_ix_i | x_i\in\ZZ~\forall i=1,\cdots,n \right\}\subseteq\ZZ^m.$$
	We call $n$ the rank of $\L$ and if $n=m$ we say that $\L$ is a full rank lattice. In this paper, we mainly consider full rank lattices containing $q\ZZ^m$, called $q$-ary lattices, defined as the following, for a given matrix $A\in\ZZ^{n\times m}$ and $\bf{u}\in\ZZ_q^n$
	\begin{align*}
		\L_q(A) &:= \left\{ \e\in\ZZ^m ~\rm{s.t.}~ \exists \bf{s}\in\ZZ_q^n~\rm{where}~A^T\bf{s}=\bf{e}\mod q \right\}\\
		\Lp_q(A) &:= \left\{ \e\in\ZZ^m~\rm{s.t.}~A\e=0\mod q \right\} \\
		\L_q^{\bf{u}}(A) &:=  \left\{ \e\in\ZZ^m~\rm{s.t.}~A\e=\bf{u}\mod q \right\}
	\end{align*}
	Note that if $\bf{t}\in\L_q^{\bf{u}}(A)$ then $\L_q^{\bf{u}}(A)=\Lp_q(A)+\bf{t}$. Hence, one can see $\L_q^{\bf{u}}(A)$ as a shift of $\Lp_q(A)$.
	
	Let $S=\{\s_1,\cdots,\s_k\}$ be a set of vectors in $\mathbb{R}^m$. We denote by $\|S\|:=\max_i\|\s_i\|$ for $i=1,\cdots,k$, the maximum $l_2$ length of the vectors in $S$. We also denote $\tilde{S}:=\{\tilde{\s}_1,\cdots,\tilde{\s}_k \}$ the Gram-Schmidt orthogonalization of the vectors $\s_1,\cdots,\s_k$ in that order. We refer to $\|\tilde{S}\|$ the Gram-Schmidt norm of $S$.
	
	Ajtai~\cite{Ajtai99} first proposed how to sample a uniform matrix $A\in\ZZ_q^{n\times m}$ with an associated basis $S_A$ of $\Lp_q(A)$ with low Gram-Schmidt norm. It is improved later by Alwen and Peikert~\cite{AP09} in the following Theorem.
	
	\begin{theorem}\label{thm:TrapGen}
		Let $q\geq 3$ be odd and $m:=\lceil 6n\log q\rceil$. There is a probabilistic polynomial-time algorithm $\TrapGen(q,n)$ that outputs a pair $(A\in\ZZ_q^{n\times m},S\in\ZZ^{m\times m})$ such that $A$ is statistically close to a uniform matrix in $\ZZ_q^{n\times m}$ and $S$ is a basis for $\Lp_q(A)$ satisfying
		\[\|\tilde{S}\|\leq O(\sqrt{n\log q})\quad\text{and}\quad\|S\|\leq O(n\log q)\]
		with all but negligible probability in $n$.
	\end{theorem}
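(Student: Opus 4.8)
The statement is the standard Ajtai/Alwen--Peikert trapdoor-generation result, so the plan is to reconstruct that construction in three stages: build a suitably \emph{structured} $A$, read off from the construction an explicit full-rank set of short vectors of $\Lp_q(A)$, and then convert that set into an honest basis without inflating the Gram--Schmidt norm.

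First I would fix $m_1 := \lceil n\log q\rceil$ (up to a small constant factor) and sample $A_1\in\ZZ_q^{n\times m_1}$ uniformly; a short counting argument shows that with all but $q^{-\Omega(n)}$ probability $A_1$ is \emph{primitive}, i.e.\ its columns generate $\ZZ_q^n$. I would then take the fixed ``gadget''-type block $G\in\ZZ_q^{n\times m_2}$ with $m_2 = n\lceil\log q\rceil$ (e.g.\ the powers-of-two decomposition matrix, or the chained-blocks variant used by Alwen--Peikert), for which a closed-form basis $S_G$ of $\Lp_q(G)$ with small entries and $\|\widetilde{S_G}\| = O(1)$ is known, sample a small random matrix $R$ (say with $\{-1,1\}$ entries) of the appropriate dimensions, and set $A := [\,A_1 \mid G - A_1 R\,]\in\ZZ_q^{n\times m}$ with $m := m_1 + m_2 \le \lceil 6n\log q\rceil$. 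Taking $m_1$ large enough, the regularity (leftover hash) lemma gives that $A_1 R \bmod q$, hence $A$, is within negligible statistical distance of uniform over $\ZZ_q^{n\times m}$.

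Next, since $A\left[\begin{smallmatrix}R\\ I\end{smallmatrix}\right] = G$ by construction, $\left[\begin{smallmatrix}R\\ I\end{smallmatrix}\right]$ is a ``$G$-trapdoor'' for $A$; combining it with $S_G$ — together with the short vectors the structured blocks supply for the remaining coordinates — yields a full-rank set $S'\subseteq\Lp_q(A)$ with $\|S'\| = O(\|R\|) = O(\sqrt m)$ and, crucially, $\|\widetilde{S'}\| = O(\sqrt m) = O(\sqrt{n\log q})$. Then, since $A$ is primitive I can compute some (possibly long) basis of $\Lp_q(A)$ in polynomial time via Hermite normal form and feed it together with $S'$ into the standard lemma that converts a full-rank set of lattice vectors into a basis of the same lattice with no larger Gram--Schmidt norm, obtaining a basis $S$ with $\|\widetilde S\| \le \|\widetilde{S'}\| = O(\sqrt{n\log q})$; a final Babai size-reduction then gives $\|S\| \le \sqrt m\,\|\widetilde S\| = O(n\log q)$. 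A union bound over the failure events (primitivity of $A_1$, the norm bound on $R$, and the statistical-distance estimate) yields the ``all but negligible probability in $n$'' guarantee.

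The hard part is arranging the first two stages \emph{simultaneously}: $A$ must be statistically uniform \emph{and} carry a short basis, and these requirements pull against each other — pure randomness gives uniformity but no trapdoor, pure structure gives a trapdoor but not uniformity. Ajtai's original construction already threads this needle, but only with $\|\widetilde S\| = O(n\log q)$; pushing the Gram--Schmidt norm down to $O(\sqrt{n\log q})$ is exactly the Alwen--Peikert refinement, and it needs the careful gadget/chain structure so that the planted lattice vectors are essentially no longer than $R$ itself, together with a basis-conversion step that does not blow up $\|\widetilde S\|$. Verifying that all the constants really fit inside $m = \lceil 6n\log q\rceil$ is the remaining bookkeeping.
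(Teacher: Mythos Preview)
The paper does not actually prove this theorem: it is stated in the preliminaries as a known result, attributed to Ajtai and to the Alwen--Peikert refinement, and is simply invoked later without argument. So there is no ``paper's own proof'' to compare against.

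That said, your sketch is a faithful reconstruction of how the result is obtained in the cited works, with one caveat worth flagging. The specific construction you describe --- writing $A=[\,A_1\mid G-A_1R\,]$ with the powers-of-two gadget $G$ and reading off the trapdoor as $\left[\begin{smallmatrix}R\\I\end{smallmatrix}\right]$ --- is really the later Micciancio--Peikert $G$-trapdoor framework rather than the original Alwen--Peikert argument. Alwen--Peikert do not use a single global gadget block; instead they plant a carefully chained sequence of structured blocks and explicitly exhibit $m$ short lattice vectors whose Gram--Schmidt lengths are controlled block by block. Both routes land on the same $O(\sqrt{n\log q})$ Gram--Schmidt bound and the same $m=O(n\log q)$, and both need the leftover-hash step for statistical uniformity and the full-rank-set-to-basis lemma at the end, so your high-level outline is correct. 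Just be aware that if you were asked to match the \emph{cited} proof rather than \emph{a} proof, you would want the chained-block version; the gadget version is cleaner and arguably more natural today, but it is not literally what \cite{AP09} does.
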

	
	\begin{definition}[Gaussian distribution]
		Let $\L\subseteq\ZZ^m$ be a lattice. For a vector $\bf{c}\in\RR^m$ and a positive parameter $\sigma\in\RR$, define:
		$$\rho_{\sigma,\c}(\x)=\exp\left(\pi\frac{\|\x-\c\|^2}{\sigma^2}\right)\quad\text{and}\quad
		\rho_{\sigma,\c}(\L)=\sum_{\x\in\L}\rho_{\sigma,\c}(\x).    $$
		The discrete Gaussian distribution over $\L$ with center $\c$ and parameter $\sigma$ is
		$$\forall \bf{y}\in\L\quad,\quad\cal{D}_{\L,\sigma,\c}(\bf{y})=\frac{\rho_{\sigma,\c}(\bf{y})}{\rho_{\sigma,\c}(\L)}.$$
	\end{definition}
	For convenience, we will denote by $\rho_\sigma$ and $\cal{D}_{\L.\sigma}$ for $\rho_{\bf{0},\sigma}$ and $\cal{D}_{\L,\sigma,\bf{0}}$ respectively. When $\sigma=1$ we will write $\rho$ instead of $\rho_1$. 
	
	We recall below in Theorem~\ref{thm:Gauss} some useful results. The first one comes from~{\cite[Lemma 4.4]{MR04}} . The second one is from~\cite{CHKP10} and formulated in ~{\cite[Theorem 17]{ABB10-EuroCrypt}} and the last one is from~{\cite[Theorem 19]{ABB10-EuroCrypt}}.
	
	\begin{theorem}\label{thm:Gauss}
		Let $q> 2$ and let $A, B$ be matrices in $\ZZ_q^{n\times m}$ with $m>n$ and $B$ is rank $n$. Let $T_A, T_B$ be a basis for $\Lp_q(A)$ and  $\Lp_q(B)$ respectively.
		Then for $c\in\RR^m$ and $U\in\ZZ_q^{n\times t}$:
		\begin{enumerate}
			\item Let $M$ be a matrix in $\ZZ_q^{n\times m_1}$ and $\sigma\geq\|\widetilde{T_A}\|\omega(\sqrt{\log(m+m_1)})$. Then there exists a PPT algorithm $\SampleLeft(A,M,T_A,U,\sigma)$ that outputs a vector $\e\in\ZZ^{m+m_1}$ distributed statistically close to $\cal{D}_{\L_q^{\u}(F_1),\sigma}$ where $F_1:=(A~|~M)$. In particular $\e\in \L_q^{U}(F_1)$, i.e., $F_1\cdot\e=U\mod q$.
			In addition, if $A$ is rank n then there is a PPT algorithm $\SampleBasisLeft(A, M, T_A, \sigma)$ that outputs a basis of $\Lambda_q^\perp(F_1)$.\\
			
			\item Let $R$ be a matrix in $\ZZ^{k\times m}$ and let $s_R:=\sup_{\|\x\|=1}\|R\x\|$. Let $F_2:=(A~|~AR+B)$. Then for  $\sigma\geq\|\widetilde{T_B}\|s_R\omega(\sqrt{\log m})$, there exists a PPT algorithm \\$\SampleRight(A,B,R,T_B,U,\sigma)$ that outputs a vector $\e\in\ZZ^{m+k}$ distributed statistically close to $\cal{D}_{\L_q^{U}(F_2),\sigma}$. In particular $\e\in \L_q^{\u}(F_2)$, i.e., $F_2\cdot\e=U\mod q$.
			In addition, if $B$ is rank n then there is a PPT algorithm\\ $\SampleBasisRight(A, B, R, T_B,\sigma)$ that outputs a basis of $\Lambda_q^\perp(F_2)$.\\
			Note that when $R$ is a random matrix in $\{-1,1\}^{m\times m}$ then $s_R<O(\sqrt{m})$ with overwhelming probability (cf.~{\cite[Lemma 15]{ABB10-EuroCrypt}}).
		\end{enumerate}
	\end{theorem}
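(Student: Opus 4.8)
\emph{Proof idea.} All the assertions are instances of a single template, so the plan is to reduce everything to two ingredients. First, a \emph{sampling lemma}: to produce a vector distributed statistically close to $\cal{D}_{\L_q^U(F),\sigma}$ for a given matrix $F$ and syndrome $U$, it suffices to have in hand a basis $T$ of $\Lp_q(F)$ with $\|\widetilde T\|\le\sigma/\omega(\sqrt{\log m})$; one computes an arbitrary integral $t$ with $Ft=U$, samples $v\leftarrow\cal{D}_{\Lp_q(F),\sigma,-t}$ from $T$ by randomized nearest-plane, and outputs $\e=t+v$, whose distribution is within negligible statistical distance of $\cal{D}_{\L_q^U(F),\sigma}$ by \cite[Lemma 4.4]{MR04}; running the same nearest-plane procedure on $T$ without a syndrome gives the corresponding \textsf{SampleBasis} routine. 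Second, the \emph{basis-delegation lemma} of \cite{CHKP10}, which turns short generating sets of these $q$-ary lattices into genuine bases without inflating the Gram--Schmidt norm. Thus the whole proof is just: build the needed short basis from $T_A$ (resp.\ $T_B$) and bound its Gram--Schmidt norm, the only external quantitative input being $s_R=O(\sqrt m)$ for $R\in\{-1,1\}^{m\times m}$ from \cite[Lemma 15]{ABB10-EuroCrypt}.

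For part (1), with $F_1=(A\mid M)$: I would observe that $\Lp_q(F_1)$ contains $\binom{t}{\mathbf 0}$ for every $t\in\Lp_q(A)$ and, for each column $m_i$ of $M$, a vector $\binom{y_i}{e_i}$ with $Ay_i=-m_i$ (here one uses that $A$ is surjective onto $\ZZ_q^n$). Delegation assembles these $m+m_1$ generators into a basis $T_{F_1}$ of $\Lp_q(F_1)$ with $\|\widetilde{T_{F_1}}\|=\|\widetilde{T_A}\|$, and the sampling lemma applied to $T_{F_1}$ produces $\SampleLeft$ and $\SampleBasisLeft$. This reproduces \cite[Theorem 17]{ABB10-EuroCrypt}.

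For part (2), with $F_2=(A\mid AR+B)$: the crucial observation is the unimodular identity $F_2\cdot\left(\begin{smallmatrix}I&-R\\0&I\end{smallmatrix}\right)=(A\mid B)$, which gives $\Lp_q(F_2)=\left(\begin{smallmatrix}I&-R\\0&I\end{smallmatrix}\right)\Lp_q\bigl((A\mid B)\bigr)$. One builds a short basis of $\Lp_q((A\mid B))$ exactly as in part (1) but with the two blocks interchanged (now using $T_B$ and $B$ of rank $n$), then left-multiplies by the unimodular transform. It remains to check that the Gram--Schmidt norm of the transformed basis is $\|\widetilde{T_B}\|$ up to a multiplicative factor $O(s_R+1)$ coming from $R$; this is precisely why the hypothesis carries the extra $s_R$, namely $\sigma\ge\|\widetilde{T_B}\|\,s_R\,\omega(\sqrt{\log m})$, and it is where $s_R=O(\sqrt m)$ enters. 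The sampling lemma then yields $\SampleRight$ and $\SampleBasisRight$, recovering \cite[Theorem 19]{ABB10-EuroCrypt}.

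The one genuinely non-routine point, and the step I expect to be the main obstacle, is the Gram--Schmidt bookkeeping in part (2): one must orthogonalize the padded generating set of $\Lp_q((A\mid B))$ in an order that keeps each new Gram--Schmidt vector no longer than the matching vector of $T_B$, and then argue that multiplying on the left by the non-orthogonal matrix $\left(\begin{smallmatrix}I&-R\\0&I\end{smallmatrix}\right)$ stretches those vectors by at most $O(s_R)$ rather than by a larger $m$-dependent amount --- a direct computation that relies on the sign structure of $R$ exactly through the bound $s_R=O(\sqrt m)$.
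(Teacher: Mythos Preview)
Your proposal is essentially correct and follows the standard approach from the references the paper cites, but you should be aware that the paper itself does not give a proof of this theorem at all: it is stated as a recalled result, with the sentence preceding it attributing the parts to \cite[Lemma 4.4]{MR04}, to \cite{CHKP10} as formulated in \cite[Theorem 17]{ABB10-EuroCrypt}, and to \cite[Theorem 19]{ABB10-EuroCrypt}. So there is nothing in the paper to compare your argument against; your sketch is in fact a faithful outline of how those cited works establish the claims.

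One small correction worth noting: in your description of part~(1) you write that the padded basis $T_{F_1}$ satisfies $\|\widetilde{T_{F_1}}\|=\|\widetilde{T_A}\|$. The correct statement (and what \cite{CHKP10} actually proves) is only the inequality $\|\widetilde{T_{F_1}}\|\le\|\widetilde{T_A}\|$ after Gram--Schmidt, since the extra vectors $\binom{y_i}{e_i}$ have Gram--Schmidt components of norm at most~$1$ once projected orthogonally to the span of the earlier columns; equality need not hold. This does not affect the conclusion, since only the upper bound is needed for the sampling lemma, but it is worth stating accurately.
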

	
	The security of our construction reduces to the LWE (Learning With Errors) problem introduced by Regev~\cite{Regev05}.
	\begin{definition}[LWE problem]
		Consider publicly a prime $q$, a positive integer $n$, and a distribution $\chi$ over $\ZZ_q$. An $(\ZZ_q,n,\chi)$-LWE problem instance consists of access to an unspecified challenge oracle $\cal{O}$, being either a noisy pseudorandom sampler $\cal{O}_\s$ associated with a secret $\s\in\ZZ_q^n$, or a truly random sampler $\cal{O}_\$$ who behaviors are as follows:
		\begin{description}
			\item[$\cal{O}_\s$:] samples of the form $(\u_i,v_i)=(\u_i,\u_i^T\s+x_i)\in\ZZ_q^n\times\ZZ_q$ where $\s\in\ZZ_q^n$ is a uniform secret key, $\u_i\in\ZZ_q^n$ is uniform and $x_i\in\ZZ_q$ is a noise withdrawn from $\chi$.
			\item[$\cal{O}_\$$:] samples are uniform pairs in $\ZZ_q^n\times\ZZ_q$.
		\end{description}\
		The $(\ZZ_q,n,\chi)$-LWE problem allows responds queries to the challenge oracle $\cal{O}$. We say that an algorithm $\cal{A}$ decides the $(\ZZ_q,n,\chi)$-LWE problem if 
		$$\Adv_{\cal{A}}^{\mathsf{LWE}}:=\left|\Pr[\cal{A}^{\cal{O}_\s}=1] - \Pr[\cal{A}^{\cal{O}_\$}=1] \right|$$    
		is non-negligible for a random $\s\in\ZZ_q^n$.
	\end{definition}
	Regev~\cite{Regev05} showed that (see Theorem~\ref{thm:LWE} below) when $\chi$ is the distribution $\overline{\Psi}_\alpha$ of the random variable $\lfloor qX\rceil\mod q$ where $\alpha\in(0,1)$ and $X$ is a normal random variable with mean $0$ and standard deviation $\alpha/\sqrt{2\pi}$ then the LWE problem is hard. 
	\begin{theorem}\label{thm:LWE}
		If there exists an efficient, possibly quantum, algorithm for deciding the $(\ZZ_q,n,\overline{\Psi}_\alpha)$-LWE problem for $q>2\sqrt{n}/\alpha$ then there is an efficient quantum algorithm for approximating the SIVP and GapSVP problems, to within $\tilde{\cal{O}}(n/\alpha)$ factors in the $l_2$ norm, in the worst case.
	\end{theorem}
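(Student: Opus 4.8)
The statement is Regev's worst-case to average-case quantum reduction, so the plan is to reconstruct that reduction. First I would bridge the gap between \emph{deciding} and \emph{searching} LWE: since $q$ is prime and polynomially bounded, a decision oracle for $(\ZZ_q,n,\overline{\Psi}_\alpha)$-LWE can be turned into a search oracle that recovers $\s\in\ZZ_q^n$ coordinate by coordinate --- guess $s_1\in\ZZ_q$, re-randomize the first coordinate of each sample consistently with the guess, and invoke the distinguisher to test whether the guess is correct; repeat over all $n$ coordinates, using repetition to amplify. From here on I may assume an oracle that, on polynomially many LWE samples, outputs the secret $\s$ with non-negligible probability.

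The heart of the proof is an \textbf{iterative step} over an arbitrary input lattice $\Lambda$ of rank $n$. The reduction maintains a supply of samples from the discrete Gaussian $\cal{D}_{\Lambda,r}$, and each iteration shrinks the width parameter $r$ by roughly a constant factor. One iteration consists of two sub-routines. (i) A \emph{classical} step that, using the LWE oracle together with samples $\v_i\gets\cal{D}_{\Lambda,r}$, solves bounded-distance decoding on the dual lattice $\Lambda^*$ within distance $d\approx\alpha q/(\sqrt2\,r)$: given a target within distance $d$ of $\Lambda^*$, form the pairs $\big(\v_i\bmod q\Lambda,\ \langle\v_i,\text{target}\rangle/q\bmod 1\big)$; decomposing the target as its nearest dual vector plus a short offset, the fractional parts are distributed as a Gaussian of width $\approx rd/q=\alpha/\sqrt2$, hence, after discretization, as a $\overline{\Psi}_\alpha$ error, so these are genuine LWE samples whose secret encodes the nearest dual vector modulo $q$ --- and the oracle extracts it, from which the exact nearest vector follows by a standard lifting. (ii) A \emph{quantum} step that converts a BDD oracle on $\Lambda^*$ within distance $d$ into fresh samples from $\cal{D}_{\Lambda,r'}$ with $r'\approx\sqrt{2n}/d$, which is $<r/2$ in the relevant regime: prepare a superposition of dual-lattice points in a large box, each perturbed by Gaussian noise of width $\approx 1/d$; run the BDD oracle into an ancilla to name the dual point; uncompute it to disentangle the ancilla; apply the quantum Fourier transform; and measure, obtaining a point of $\Lambda$ with amplitude proportional to a Gaussian, i.e., a sample from $\cal{D}_{\Lambda,r'}$.

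To conclude, I would bootstrap the iteration. An initial supply of $\cal{D}_{\Lambda,r_0}$ samples for $r_0=2^{O(n)}\lambda_n(\Lambda)$ is generated classically by Babai/Klein rounding on an LLL-reduced basis. Applying the iterative step $\mathrm{poly}(n)$ times drives $r$ down to $r\approx\eta_\epsilon(\Lambda)$; at that smoothing-scale width, $O(n)$ samples from $\cal{D}_{\Lambda,r}$ contain, with overwhelming probability, $n$ linearly independent vectors of norm at most $r\sqrt n$, which solves $\mathrm{SIVP}$ to within $\tilde{\cal{O}}(n/\alpha)$, while a counting argument comparing the Gaussian mass on $\Lambda$ with that on a dense sublattice yields the $\mathrm{GapSVP}$ decision. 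The hypothesis $q>2\sqrt n/\alpha$ is precisely what makes the discretized Gaussian $\overline{\Psi}_\alpha$ dominate the rounding error in sub-routine (i) and keeps the width $r'$ from ever dropping below $\eta_\epsilon(\Lambda)$ --- the regime in which both sub-routines are provably correct.

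I expect the \emph{quantum} sub-routine (ii) to be the main obstacle: one must prove that running and then uncomputing the BDD oracle genuinely disentangles the lattice register, so that the subsequent Fourier transform yields a state whose measurement distribution is statistically close to $\cal{D}_{\Lambda,r'}$. This requires carefully bounding the stray amplitude on perturbation vectors for which the BDD oracle errs or is ambiguous, together with Banaszczyk-type tail bounds and Poisson summation to control the Fourier transform of a lattice-restricted Gaussian. A secondary difficulty is that the LWE oracle in (i) succeeds only with non-negligible probability over a random secret and random first components, so the classical step must be wrapped in a worst-case-to-average-case argument (re-randomizing the secret, permuting coordinates, and re-randomizing the error) and amplified before it is usable as a reliable decoder inside the loop.
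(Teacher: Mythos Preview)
Your outline is a faithful sketch of Regev's original quantum reduction and is, at the level of detail given, correct. However, the paper does not prove this theorem at all: it is quoted as a known result attributed to Regev~\cite{Regev05} and stated without proof, serving only as background to justify the hardness assumption underlying the IBEET-FA construction. So what you have written goes well beyond what the paper does --- the paper's ``proof'' is simply the citation --- and in that sense there is nothing to compare against. If anything, your sketch would be appropriate for a survey of Regev's work rather than for this paper, where a one-line reference suffices.
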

	Hence if we assume  the hardness of approximating the SIVP and GapSVP problems in lattices of dimension $n$ to within polynomial (in $n$) factors, then it follows from Theorem~\ref{thm:LWE} that deciding the LWE problem is hard when $n/\alpha$ is a polynomial in $n$.
	
	
	\section{Proposed Construction: IBEET-FA over Integer Lattice}

	\subsection{Construction}
	\begin{description}
		\item[Setup($\lambda$)]: On input a security parameter $\lambda$, set the parameters $q,n,m,\sigma,\alpha$ as in section \ref{sec:params}
		\begin{enumerate}
			\item Use $\TrapGen(q,n)$ to generate uniformly random $n\times m$-matrices $A, A'\in\ZZ_q^{n\times m}$ together with trapdoors $T_{A}$, $T_{A'}\in\ZZ_q^{m\times m}$ respectively.
			\item Select $\ell+1$ uniformly random $n\times m$ matrices $A_1,\cdots,A_\ell,B\in\ZZ_q^{n\times m}$ where $\ell$ is the bit length of identity vector $\ID$.
			\item Select a uniformly random matrix $U\in\ZZ_q^{n\times t}$ where $t$ is the bit length of message $\m$.
			\item $H: \{0,1\}^*\to \{0,1\}^t$ is a collision-resistant hash function.
			\item $H': \{0,1\}^*\to \{0,1\}^\lambda$ is a collision-resistant hash function.
			\item Output the public key and the master secret key
			$$\PP=(A,A',A_1,\cdots,A_\ell,B,U)\quad,\quad \MSK=(T_A,T_{A'}).$$
		\end{enumerate} 
		
		\item[Extract($\PP,\MSK,\ID$)]: On input the public parameter $\PP$, a master secret key $\MSK$ and an identity $\ID=(\id_1,\cdots,\id_\ell)\in\{-1,1\}^\ell$:
		
		\begin{enumerate}
			\item Let $A_{\ID} = B + \sum_{i=1}^\ell\id_iA_i\in\ZZ_q^{n\times m}$.
			\item Set $F_\ID=(A|A_\ID), F'_\ID=(A'|A_\ID)\in\ZZ_q^{n\times 2m}$.
			\item Sample $E_{\ID}, E'_{\ID}\in\ZZ^{2m\times 2m}$ as 
			$$E_{\ID}\gets\SampleBasisLeft(A,A_{\ID},T_A,\sigma)$$
			$$E'_{\ID}\gets\SampleBasisLeft(A',A_{\ID},T_{A'},\sigma)$$
			\item Output $\SK_\ID:=(E_{\ID},E'_{\ID})$.
		\end{enumerate}
		Note that $F_\ID\cdot E_{\ID} = 0, F'_\ID\cdot E'_{\ID} = 0$ in $\ZZ_q$ and $E_{\ID},E'_{\ID}$ are basis of $\Lambda_q^\perp(F_\ID)$, $\Lambda_q^\perp(F'_\ID)$ respectively.\\
		
		\item[Encrypt($\PP,\ID,\bf{m}$)]: On input the public parameter $\PP$, an identity $\ID=(\id_1,\cdots,\id_\ell)\in\{-1,1\}^\ell$ and a message $\bf{m}\in\{0,1\}^t$, do:
		\begin{enumerate}
			\item Choose uniformly random $\bf{s}_1, \bf{s}_2\in\ZZ_q^n$
			\item Choose $\bf{x}_1,\bf{x}_2\in\overline{\Psi}_\alpha^t$ and compute
			$$\c_1 = U^T\bf{s}_1 +\bf{x}_1 +\bf{m}\big\lfloor\frac{q}{2}\big\rfloor \in\ZZ_q^t$$
			$$\c_2 = U^T\bf{s}_2 +\bf{x}_2 +H(\bf{m})\big\lfloor\frac{q}{2}\big\rfloor \in\ZZ_q^t$$
			\item Pick $\ell$ uniformly random matrices $R_i\in\{-1,1\}^{m\times m}$ for $i=1,\cdots,\ell$, set
			$$R_{\ID}=\sum_{i=1}^\ell \id_iR_i\in\{-\ell,\cdots,\ell\}^{m\times m}$$
			\item Choose a uniformly at random $R\in\{-\ell,\cdots, \ell\}^{m\times m}$ and set
			$$F_1=(F_\ID|AR)\quad,\quad F_2=(F'_\ID|AR)\in\ZZ_q^{n\times 3m}$$
			\item Choose $\bf{y}_1, \bf{y}_2\in\overline{\Psi}_\alpha^m$ and set
			$$\bf{z}_1=R_{\ID}^T\bf{y}_1 \quad,\quad \bf{z}_2=R_{\ID}^T\bf{y}_2\in\ZZ_q^{m}$$
			$$\bf{r}_1=R^T\bf{y}_1 \quad,\quad \bf{r}_2=R^T\bf{y}_2\in\ZZ_q^{m}$$			
			\item Compute
			$$\c_3=F_1^T\bf{s}_1+ \left[\bf{y}_1 | \bf{z}_1 | \bf{r}_1 \right]^T\in\ZZ_q^{3m}$$
			$$\c_4=F_2^T\bf{s}_2+ \left[\bf{y}_2 | \bf{z}_2 | \bf{r}_2 \right]^T\in\ZZ_q^{3m}$$
			\item Compute $\c_5=H'(R||\c_1||\c_2||\c_3||\c_4)$.
			\item The ciphertext is
			$$\CT_\ID=(R,\c_1,\c_2,\c_3,\c_4,\c_5)\in\ZZ_q^{m \times m} \times \ZZ_q^{2t+6m} \times \{0,1\}^\lambda$$    
		\end{enumerate}
		
		\item[Decrypt($\PP,\SK_\ID,\CT$)]: On input public parameter $\PP$, private key $\SK_\ID=(E_\ID,E'_\ID)$ and a ciphertext $\CT=(R,\c_1,\c_2,\c_3,\c_4,\c_5)$, do:
		\begin{enumerate}
			\item Check if $\c_5=H'(R||\c_1||\c_2||\c_3||\c_4)$. If not, returns $\bot$.
			\item Sample $\e\in\ZZ_q^{3m\times t}$ as 
			$$\e\gets\SampleLeft(F_\ID,AR,E_\ID,U,\sigma)$$
			\item Note that $F_1\cdot\e=U\in\ZZ_q^{n\times t}$. We compute $\bf{w}\gets\c_1-\e^T\c_3\in\ZZ_q^t$.
			\item For each $i=1,\cdots, t$, compare $w_i$ and $\lfloor\frac{q}{2}\rfloor$. If they are close, i.e. $\left|w_i - \lfloor\frac{q}{2}\rfloor \right| < \lfloor\frac{q}{4}\rfloor$, output $m_i=1$, otherwise $m_i=0$. We then obtain the message $\bf{m}$.
			\item Sample $\e'\in\ZZ_q^{3m\times t}$ as 
			$$\e'\gets\SampleLeft(F'_\ID,AR,E'_\ID,U,\sigma)$$
			\item Note that $F_2\cdot\e'=U\in\ZZ_q^{n\times t}$. We compute $\bf{w}'\gets\c_2-(\e')^T\c_4\in\ZZ_q^t$.
			\item For each $i=1,\cdots,t$, compare $w'_i$ and $\lfloor\frac{q}{2}\rfloor$. If they are close, i.e. $\left|w_i - \lfloor\frac{q}{2}\rfloor \right| < \lfloor\frac{q}{4}\rfloor$, output $h_i=1$, otherwise $h_i=0$. We then obtain the vector $\bf{h}$.
			\item Finally, if $\bf{h}=H(\bf{m})$ then output $\bf{m}$, otherwise output $\perp$.\\
		\end{enumerate}
		
		\item[Flexible Authorization:] Suppose we have two identities in the system $\ID_i$ and $\ID_j$ along with $\CT_{\ID_i} = (R_{i}, \c_{i1}, \c_{i2}, \c_{i3}, \c_{i4},\c_{i5})$ and $\CT_{\ID_j} = (R_{j}, \c_{j1}, \c_{j2}, \c_{j3}, \c_{j4},\c_{j5})$ be their ciphertexts respectively.
		
		\item[Type-1 Authorization]:
		\begin{itemize}
			\item $\textbf{Td}_1(\SK_\ID)$: On input $\SK_\ID=(E_\ID,E'_\ID)$, outputs a trapdoor $\td_{1,\ID}=E'_\ID$.
			\item \textbf{Test}($\td_{1,\ID_i},\td_{1,\ID_j},\CT_{\ID_i},\CT_{\ID_j}$): On input trapdoors $\td_{1,\ID_i}, \td_{1,\ID_j}$ and ciphertexts $\CT_{\ID_i},\CT_{\ID_j}$ for identities $\ID_i, \ID_j$ respectively, do 
			\begin{enumerate}
				\item For each identity $\ID_i$ (resp. $\ID_j$), do the following:
				\begin{itemize}
					\item[$\bullet$] Check if $\c_{i5}=H'(R_i||\c_{i1}||\c_{i2}||\c_{i3}||\c_{i4})$. If not, returns $\bot$.
					\item[$\bullet$] Take $R_i$ from $\CT_{\ID_i}$ and sample $\e'_i\in\ZZ_q^{3m\times t}$ as
					$$\e'_i\gets\SampleLeft(F'_\ID,AR_i,E'_\ID,U,\sigma)$$
					Note that $F_{i2}\cdot\e'_i=U\in\ZZ_q^{n\times t}$
					\item[$\bullet$] Compute $\bf{w'_i}\gets\c_{i2}-(\e'_i)^T\c_{i4}\in\ZZ_q^t$. For each $k=1,\cdots, t$, compare $w_{ik}$ with $\lfloor\frac{q}{2}\rfloor$ and output $\bf{h}_{ik}=1$ if they are close, and $0$ otherwise. We obtain the vector $\bf{h}_i$ (resp. $\bf{h}_j$).
				\end{itemize}
				\item Output $1$ if  $\bf{h}_i=\bf{h}_j$ and $0$ otherwise.
			\end{enumerate}
		\end{itemize}
		
		\item[Type-2 Authorization]:
		\begin{itemize}
			\item $\textbf{Td}_2(\SK_\ID, \CT_\ID)$: On input $\SK_\ID=(E_\ID,E'_\ID)$ and its ciphertext $\CT_\ID$, outputs a trapdoor $\td_{2,\ID}$ with following procedure:
			\begin{enumerate}
				\item Check if $\c_5=H'(R||\c_{1}||\c_{2}||\c_{3}||\c_{4})$. If not, returns $\bot$.
				\item Take $R$ from $\CT_{\ID}$ and sample $\td_{2,\ID} = \e'\in\ZZ_q^{3m\times t}$ as
				$$\e'\gets\SampleLeft(F'_\ID,AR,E'_\ID,U,\sigma)$$
				Note that $F_2\cdot\e'=U\in\ZZ_q^{n\times t}$
			\end{enumerate}
			\item \textbf{Test}($\td_{2,\ID_i},\td_{2,\ID_j},\CT_{\ID_i},\CT_{\ID_j}$): On input trapdoors $\td_{2,\ID_i}, \td_{2,\ID_j}$ and ciphertexts
			 $\CT_{\ID_i},\CT_{\ID_j}$ for identities $\ID_i, \ID_j$ respectively, do 
			\begin{enumerate}
				\item Compute $\bf{w'_i}\gets\c_{i2}-(\e'_i)^T\c_{i4}\in\ZZ_q^t$. For each $k=1,\cdots, t$, compare $w_{ik}$ with $\lfloor\frac{q}{2}\rfloor$ and output $\bf{h}_{ik}=1$ if they are close, and $0$ otherwise. We then obtain the vector $\bf{h}_i$ (resp. $\bf{h}_j$).
				\item Output $1$ if  $\bf{h}_i=\bf{h}_j$ and $0$ otherwise.
			\end{enumerate}
		\end{itemize}
		
		\item[Type-3 Authorization]:
		\begin{itemize}
			\item $\textbf{Td}_3(\SK_{\ID_i})$: On input $\SK_{\ID_i}=(E_{\ID_i},E'_{\ID_i})$, returns $\td_{3,{\ID_i}}=E'_{\ID_i}$.
			\item $\textbf{Td}_3(\SK_{\ID_j}, \CT_j)$: On input an identity's secret key $\SK_{\ID_j}=(E_{\ID_j},E'_{\ID_j})$ and ciphertext $\CT_j$, it outputs a trapdoor $\td_{3,{\ID_j}}$ using procedure:
			\begin{enumerate}
				\item Check if $\c_{j5}=H'(R_j||\c_{j1}||\c_{j2}||\c_{j3}||\c_{j4})$. If not, returns $\bot$.
				\item Take $R_j$ from $\CT_{\ID_j}$ and sample $\td_{3,{\ID_j}} = \e_j'\in\ZZ_q^{3m\times t}$ as
				$$\e_j'\gets\SampleLeft(F'_\ID,AR_j,E'_\ID,U,\sigma)$$
			\end{enumerate}
			Note that $F_{j2}\cdot\e_j'=U\in\ZZ_q^{n\times t}$.
			
			\item \textbf{Test}($\td_{3,\ID_i},\td_{3,\ID_j},\CT_{\ID_i},\CT_{\ID_j}$): On input trapdoors $\td_{3,\ID_i}, \td_{3,\ID_j}$ and ciphertexts $\CT_{\ID_i},\CT_{\ID_j}$ for identities $\ID_i, \ID_j$ respectively, do 
			\begin{enumerate}
				\item For identity $\ID_i$, do the following:
				\begin{itemize}
					\item[$\bullet$] Check if $\c_{i5}=H'(R_i||\c_{i1}||\c_{i2}||\c_{i3}||\c_{i4})$. If not, returns $\bot$.
					\item[$\bullet$] Take $R_i$ from $\CT_{\ID_i}$ and sample $\e'_i\in\ZZ_q^{3m\times t}$ as
					$$\e'_i\gets\SampleLeft(F'_\ID,AR_i,E'_\ID,U,\sigma)$$
					Note that $F_{i2}\cdot\e'_i=U\in\ZZ_q^{n\times t}$
					\item[$\bullet$] Compute $\bf{w'_i}\gets\c_{i2}-(\e'_i)^T\c_{i4}\in\ZZ_q^t$. For each $k=1,\cdots, t$, compare $w_{ik}$ with $\lfloor\frac{q}{2}\rfloor$ and output $\bf{h}_{ik}=1$ if they are close, and $0$ otherwise. We then obtain the vector $\bf{h}_i$.
				\end{itemize}
				\item For each $\ID_j$: Compute $\bf{w'_j}\gets\c_{j2}-(\e'_j)^T\c_{j4}\in\ZZ_q^t$. For each $k=1,\cdots, t$, compare $w_{ik}$ with $\lfloor\frac{q}{2}\rfloor$ and output $\bf{h}_{jk}=1$ if they are close, and $0$ otherwise. We then obtain the vector $\bf{h}_j$.
				\item Output $1$ if  $\bf{h}_i=\bf{h}_j$ and $0$ otherwise.
			\end{enumerate}
		\end{itemize}
	\end{description}
	
	\begin{theorem}
		Proposed IBEET-FA construction above is correct if $H$  is a collision-resistant hash function.
	\end{theorem}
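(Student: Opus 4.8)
The plan is to check, one at a time, the three conditions in the definition of correctness. Conditions~(1) and~(2) come down to a noise-growth estimate in the style of Agrawal--Boneh--Boyen~\cite{ABB10-EuroCrypt}, and condition~(3) is the only place where collision resistance of $H$ is actually used.

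\emph{Condition~(1) (decryption).} First I would unfold $\bf{w}=\c_1-\e^T\c_3$ from $\Decrypt$. Since $\e\gets\SampleLeft(F_\ID,AR,E_\ID,U,\sigma)$, Theorem~\ref{thm:Gauss}(1) gives $F_1\cdot\e=U$ for $F_1=(F_\ID|AR)$, with $\e$ distributed as $\cal{D}_{\L_q^U(F_1),\sigma}$; substituting $\c_1=U^T\s_1+\x_1+\m\big\lfloor\frac{q}{2}\big\rfloor$ and $\c_3=F_1^T\s_1+[\bf{y}_1|\bf{z}_1|\bf{r}_1]^T$ yields
\[
\bf{w}=\m\cdot\big\lfloor\tfrac{q}{2}\big\rfloor+\x_1-\e^T[\bf{y}_1|\bf{z}_1|\bf{r}_1]^T .
\]
It then remains to show every coordinate of the error $\x_1-\e^T[\bf{y}_1|\bf{z}_1|\bf{r}_1]^T$ has magnitude below $\big\lfloor\frac{q}{4}\big\rfloor$ with overwhelming probability, so that the coordinate-wise rounding in steps~3--4 returns $\m$. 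Writing $\e$ in three $m\times t$ blocks and using $\bf{z}_1=R_\ID^T\bf{y}_1$, $\bf{r}_1=R^T\bf{y}_1$, this amounts to the bounds $\|\e\|\le\sigma\sqrt{3m}$ (discrete Gaussian tail bound), $\|\x_1\|$ and $\|\bf{y}_1\|$ from the $\overline{\Psi}_\alpha$ tail bound, $s_{R_\ID}=O(\ell\sqrt m)$ (from $R_\ID=\sum_{i=1}^\ell\id_iR_i$ with $R_i\in\{-1,1\}^{m\times m}$ and the last remark of Theorem~\ref{thm:Gauss}), and $s_R=O(\ell\sqrt m)$ for the fresh tag $R\in\{-\ell,\dots,\ell\}^{m\times m}$; the parameters of Section~\ref{sec:params} are fixed precisely so that the aggregate stays below $\big\lfloor\frac{q}{4}\big\rfloor$. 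The identical computation applied to $\bf{w}'=\c_2-(\e')^T\c_4$, with $F_2\cdot\e'=U$, recovers $\bf{h}=H(\m)$, so the test $\bf{h}=H(\m)$ in step~8 succeeds and $\Decrypt$ outputs $\m$.

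\emph{Conditions~(2) and~(3) (equality test).} For all three authorization types the $\Test$ procedure first verifies the $\c_5$ hash, which holds for honestly generated ciphertexts by construction, and then for each party computes $\bf{h}_i$ (resp.\ $\bf{h}_j$) exactly the way the hash component is decrypted: it samples $\e'_i\gets\SampleLeft(F'_{\ID_i},AR_i,E'_{\ID_i},U,\sigma)$ with the same $R_i$ that occurs in $\CT_{\ID_i}$, so that the relevant $F\cdot\e'_i=U$, and rounds $\c_{i2}-(\e'_i)^T\c_{i4}$. By the noise estimate of condition~(1) the cancellation goes through, giving $\bf{h}_i=H(\m_i)$ and $\bf{h}_j=H(\m_j)$ with overwhelming probability. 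If $\m_i=\m_j$ then $\bf{h}_i=\bf{h}_j$ and $\Test$ outputs~$1$, which is condition~(2). For condition~(3), $\Dec(\PP,\SK_{\ID_i},\CT_{\ID_i})\neq\Dec(\PP,\SK_{\ID_j},\CT_{\ID_j})$ means $\m_i\neq\m_j$, yet $\Test$ returns~$1$ only when $\bf{h}_i=\bf{h}_j$, i.e.\ $H(\m_i)=H(\m_j)$, which is a collision for $H$ and therefore occurs with negligible probability.

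\emph{Main obstacle.} The only genuine work is the noise bound in condition~(1): carefully tracking the singular value $s_R$ of the ciphertext tag $R$, whose entries range over $\{-\ell,\dots,\ell\}$ rather than $\{-1,1\}$ (so it is a factor $\approx\ell$ larger than the $O(\sqrt m)$ bound of \cite[Lemma~15]{ABB10-EuroCrypt}), together with $s_{R_\ID}$, and verifying that the parameter choices of Section~\ref{sec:params} still push the total error below $\big\lfloor\frac{q}{4}\big\rfloor$; everything else is a reuse of that estimate or a one-line appeal to collision resistance of $H$. As is standard for lattice schemes, conditions~(1) and~(2) hold with overwhelming rather than literally unit probability, so the equalities ``$=1$'' in the statement should be read in that sense.
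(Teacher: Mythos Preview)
Your proposal is correct and follows the same logical structure as the paper's own argument: decryption recovers $\m$ via the standard noise-cancellation computation, each $\Test_\alpha$ recovers $H(\m_i)$ and $H(\m_j)$ by the same mechanism, and collision resistance of $H$ handles the soundness direction. In fact your write-up is considerably more detailed than the paper's proof, which simply asserts that valid ciphertexts decrypt correctly and that $\Test$ compares $H(\m_i)$ with $H(\m_j)$, then invokes collision resistance; the explicit unfolding of $\bf{w}=\m\lfloor q/2\rfloor+\x_1-\e^T[\bf{y}_1|\bf{z}_1|\bf{r}_1]^T$ and the tracking of $s_R$ and $s_{R_\ID}$ are details the paper defers entirely to the parameter section~\ref{sec:params}.
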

	\begin{proof}
		We have that if $\CT$ is a valid ciphertext of $\bf{m}$ then the decryption will always output $\bf{m}$. Furthermore, Test of Type-$\alpha$ (for $\alpha = 1, 2, 3$) checks whether $H(\bf{m}_i)=H(\bf{m}_j)$ where $\CT_i$ and $\CT_j$ are valid ciphertexts of two identities $\ID_i$ and $\ID_j$ respectively. It outputs $1$, meaning that $\bf{m}_i=\bf{m}_j$, which is always correct with overwhelming probability since $H$ is collision resistant. Hence, the proposed IBEET-FA described above is correct.
	\end{proof}

	\subsection{Security analysis}
	We show that the our proposed IBEET-FA construction is $\IND$ secure against Type-II adversaries (cf.~Theorem~\ref{thm:IND}) and  $\OW$ secure against Type-I adversaries (cf.~Theorem~\ref{thm:OW}).
	
	\begin{theorem}\label{thm:IND}
		The IBEET-FA system with parameters $(q,n,m,\sigma,\alpha)$ as in Section~\eqref{sec:params}  is $\IND$ secure provided that $H$ is a one-way hash function and the $(\ZZ_q,n,\bar\Psi_\a)$-LWE assumption holds. 
		In particular, suppose there exists a probabilistic algorithm $\cal{A}$ that wins the $\IND$ game with advantage $\epsilon$. Then there is a probabilistic algorithm $\cal{B}$ that solves the $(\ZZ_q,n,\bar\Psi_\a)$-LWE problem. 
	\end{theorem}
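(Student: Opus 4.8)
The plan is to follow the adaptive-identity hybrid argument of Agrawal--Boneh--Boyen~\cite{ABB10-EuroCrypt} (as already adapted to the equality-test setting in the works this paper builds on), modified to accommodate the two-layer ciphertext (message part $(\c_1,\c_3)$ and hash part $(\c_2,\c_4)$) and the extra matrix tag $R$. I would argue through a short sequence of games: Game~0 is the real $\IND$ game; the final game is one in which the challenge ciphertext is statistically independent of the challenge bit $b$, so that $\cal{A}$ has advantage $0$; the gaps are bounded by the abort analysis, the LWE assumption, and one-wayness of $H$. Throughout, I would crucially use that a Type-II adversary never queries $\cal{O}^{\Ext}$ or $\cal{O}^{\Td_\alpha}$ on $\ID^{*}$, so the reduction never needs $T_A$ nor the target's secret key.

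\textbf{Simulating \Setup.} Given an $(\ZZ_q,n,\overline{\Psi}_\alpha)$-LWE instance, $\cal{B}$ takes $m+t$ samples and assembles $A\in\ZZ_q^{n\times m}$ and $U\in\ZZ_q^{n\times t}$ from the first components, retaining the second components as vectors $\v_0\in\ZZ_q^m,\v_1\in\ZZ_q^t$ (which are either $A^{T}\s+\text{noise},\,U^{T}\s+\text{noise}$ for a common secret $\s$, or uniform). It generates $A'$ with trapdoor $T_{A'}$ via $\TrapGen$, a low-norm matrix $B$ with trapdoor $T_B$ via $\TrapGen$, samples $R_i^{*}\in\{-1,1\}^{m\times m}$, and sets $A_i:=AR_i^{*}+h_iB$ for scalars $h_i$ encoding a partition of the identity space, so that $A_{\ID}=AR_{\ID}^{*}+g(\ID)B$ with $R_{\ID}^{*}=\sum_i\id_iR_i^{*}$ and $g$ an affine form in the $\id_i$. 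By Theorem~\ref{thm:TrapGen} and the (generalized) leftover-hash argument of~\cite{ABB10-EuroCrypt}, the resulting $\PP$ is statistically close to a real public key. The reduction \emph{aborts} if any identity submitted to $\cal{O}^{\Ext}$ has $g(\ID)=0$, or if the eventual target has $g(\ID^{*})\neq 0$.

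\textbf{Answering queries.} For an extraction query on $\ID$ with $g(\ID)\neq 0$, $\cal{B}$ produces $E_{\ID}$ with $\SampleBasisRight$ on $F_\ID=(A\mid AR_{\ID}^{*}+g(\ID)B)$ using $T_B$ (here $q$ prime and $g(\ID)$ invertible make $g(\ID)B$ rank $n$ with the same perpendicular lattice as $B$), and $E'_{\ID}$ with $\SampleBasisLeft$ on $F'_\ID=(A'\mid A_\ID)$ using $T_{A'}$; by Theorem~\ref{thm:Gauss} both are distributed as in the real $\Extract$. A Type-1 query returns $E'_{\ID}$; a Type-2 query returns $\SampleLeft(F'_\ID,AR,E'_\ID,U,\sigma)$, which $\cal{B}$ can run since it holds $E'_\ID$. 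At the \textbf{Challenge} step, if $\cal{B}$ has not aborted then $A_{\ID^{*}}=AR_{\ID^{*}}^{*}$; $\cal{B}$ picks $b$ and a fresh tag $R\in\{-\ell,\dots,\ell\}^{m\times m}$ and outputs $\c_1^{*}=\v_1+\m_b\lfloor q/2\rfloor$ and $\c_3^{*}=[\,\v_0\mid R_{\ID^{*}}^{*T}\v_0\mid R^{T}\v_0\,]^{T}$, together with honestly-formed $\c_2^{*},\c_4^{*},\c_5^{*}$. Using $A_{\ID^{*}}^{T}\s=R_{\ID^{*}}^{*T}A^{T}\s$ and $(AR)^{T}\s=R^{T}A^{T}\s$, plus the generalized leftover-hash lemma to absorb the correlation between the $R_i^{*}$ appearing in $\PP$ and in the challenge, this reproduces the real distribution of $(\c_1,\c_3)$ when the LWE oracle is pseudorandom, whereas when it is random $\c_1^{*}$ is uniform, hence $(\c_1^{*},\c_3^{*})$ is independent of $\m_b$. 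A further, independent LWE hybrid embedded analogously into $A,A',U$ randomizes $(\c_2^{*},\c_4^{*})$ (which carry $H(\m_b)$); one-wayness of $H$ is invoked here to argue this hashed-message layer gives the adversary no residual handle on $b$ once masked. After these steps the whole challenge ciphertext — including $\c_5^{*}=H'(R\|\c_1^{*}\|\c_2^{*}\|\c_3^{*}\|\c_4^{*})$ — is independent of $b$, so $\cal{A}$'s conditional advantage is $0$ and $\Adv^{\IND}_{\cal{A}}\le \zeta^{-1}\bigl(2\,\Adv^{\mathsf{LWE}}_{\cal{B}}+\mathrm{negl}(\lambda)\bigr)$, where $\zeta$ is the inverse-polynomial lower bound on the no-abort probability.

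\textbf{Main obstacle.} The delicate point, as in all adaptive (Waters-style) lattice IBE proofs, is the abort analysis: one must show that the partition induced by the $h_i$ lets $\cal{B}$ avoid abort (no extraction query hits $g(\cdot)=0$ while the target does) with probability at least an inverse polynomial, and — more importantly — that conditioning on non-abort does not bias $\cal{A}$'s success probability, which requires an artificial-abort step or the abort-resistant-hash analysis of~\cite{ABB10-EuroCrypt}. The secondary technical burden is verifying that the parameters of Section~\ref{sec:params} keep all calls to $\SampleLeft$, $\SampleRight$, $\SampleBasisLeft$, $\SampleBasisRight$ and the leftover-hash invocations statistically sound for the enlarged tag range $R\in\{-\ell,\dots,\ell\}^{m\times m}$ (so that $s_R=O(\ell\sqrt{m})$ is still compatible with $\sigma$), and that the two LWE embeddings can share the fixed public matrix $A$ with independent secrets via a standard hybrid.
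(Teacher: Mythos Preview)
Your overall skeleton matches the paper's proof closely: an ABB-style sequence of games with the Waters partition $A_i = AR_i^* - h_i B$, the abort/artificial-abort analysis, answering $\cal{O}^{\Ext}$ via $\SampleBasisRight$ through $T_B$ and $\cal{O}^{\Td_\alpha}$ through $T_{A'}$, and a single LWE embedding into $(A,U)$ that randomises $(\c_1^*,\c_3^*)$ exactly as you describe. That part is essentially the paper's argument.

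The divergence --- and the gap --- is your treatment of the second channel $(\c_2^*,\c_4^*)$. The ``further, independent LWE hybrid embedded analogously into $A,A',U$'' does not go through. To embed LWE into $A'$ you must give up $T_{A'}$, but then you can no longer produce $E'_\ID$ for non-target identities: $F'_\ID = (A'\mid A_\ID)$ with $A_\ID = AR_\ID + h_\ID B$ is \emph{not} of the shape $(A'\mid A'R' + h_\ID B)$ that $\SampleBasisRight$ with $T_B$ requires, and no other trapdoor is available. Even at the target, the noise structure of $\c_4^*$ ties the $A'$-block error $\bf{y}_2$ to the remaining blocks via $(R^*_{\ID^*})^T\bf{y}_2$ and $(R^*)^T\bf{y}_2$, whereas LWE samples for $A$ and for $A'$ with a common secret carry \emph{independent} errors, so your simulated $\c_4^*$ would not have the correct distribution. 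The paper does not attempt a second LWE embedding at all: it inserts an early game (its Game~2) in which the challenger replaces the hashed payload by $H(\bf{m}')$ for a fresh $\bf{m}'$ and invokes one-wayness of $H$ there, so that the single LWE reduction need only randomise $(\c_1^*,\c_3^*)$ while $T_{A'}$ is retained throughout to answer every trapdoor query. You should replace your second hybrid with that step rather than combine it with one-wayness as you do now.
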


	\begin{proof}	
		The proof is similar to that of {\cite[Theorem 25]{ABB10-EuroCrypt}}, which means to show our construction is indistinguishable from random, meaning the challenge ciphertext is indistinguishable from a random element in the ciphertext space. Assume that there is a Type-II adversary $\cal{A}$ who breaks the $\IND$ security of the IBEET-FA scheme with non-negligible probability $\epsilon$. We now construct an algorithm $\cal{B}$ who solves the LWE problem using $\cal{A}$. We denote the adversary $\cal{A}$'s target identity is $\ID^*$ and the challenge ciphertext is $\CT_{\ID^*}^* = (R^*, \c_1^*, \c_2^*, \c_3^*, \c_4^*,\c_5^*)$.
		
		The proof proceeds in a sequence of similar setting games where the first game is identical to the original $\IND$ one.
		
		\begin{description}
			\item[Game 0.] This is the original $\IND$ game between the attacker $\cal{A}$ against the scheme and the $\IND$ challenger. Recall that in this Game 0 the challenger generates public parameter PP by choosing $\ell + 3$ random matrices $A, A', A_1, \cdots, A_\ell, B$ in $\ZZ_q^{n \times m}$. Let $R^*_i\in\{-1,1\}^{m\times m}$ for $i=1,\cdots,\ell$ be the ephemeral random matrices generated when creating the ciphertext $\CT_{\ID^*}^*$.\\
			
			\item[Game 1.] At setup phase, the challenger $\cal{B}$ chooses $\ell$ uniform random matrices $R_i^*$, and $\ell$ random scalars $h_i\in\ZZ_q$ for $i=1,\cdots,\ell$. After that, it generates $A, A', B$ as in Game 0 and constructs the matrices $A_i$ for $i=1,\cdots,\ell$ as
			$$A_i\gets A\cdot R^*_i-h_i\cdot B\in\ZZ_q^{n\times m}.$$
			The remainder of Game 1 is unchanged with $R_i^*$, $i=1,\cdots, \ell$ is used to generate the challenge ciphertext. Note that these $R^*_i\in\{-1,1\}^{m\times m}$ are chosen in advance in setup phase without the knowledge of $\ID^*$. Using~{\cite[Lemma 13]{ABB10-EuroCrypt}}, we prove that the matrices $A_i$ are statistically close to uniform random independent. Similar to the proof of {\cite[Theorem 25]{ABB10-EuroCrypt}}, in the adversary's view, Game 1 is statistically indistinguishable from Game 0.
			
			\item[Game 2.] At the challenge phase, $\cal{B}$ chooses arbitrary message $\bf{m}'$ from the message space and  encrypts $\bf{m}'$ when generating challenge ciphertext. Here we can not expect the behavior of $\cal{A}$. $\cal{A}$ can obtain $H(\bf{m}')$. If $\cal{A}$ outputs $\bf{m}'$, meaning $\cal{A}$ has broken the one-wayness of the hash function $H$. 
			
			
			\item[Game 3.] We add an abort that is independent of adversary's view as follow:
			\begin{itemize}
				\item In the setup phase, the challenger chooses random $h_i\in\ZZ_q$, $i=1,\cdots, \ell$ and keeps these values private.
				\item In the final guess phase, the adversary outputs a guest value $b'\in\{0,1\}$ for $b$. The challenger now does the following:
				\begin{enumerate}
					\item \textbf{Abort check:} for all queries $\SK_\ID$ to the extract secret key oracle $\cal{O}^\Ext$, the challenger checks whether the identity $\ID=(\id_1,\cdots,\id_\ell)$ satisfies $1+\sum_{i=1}^\ell\id_ih_i\ne 0$ and $1+\sum_{i=1}^\ell\id^*_ih_i= 0$. If not then the $\cal{B}$ overwrites $b'$ with a fresh random bit in $\{0,1\}$ and aborts the game. Note that this is unnoticed from the adversary's view and $\cal{B}$ can even abort the game as soon as the condition is true.
					\item \textbf{Artificial abort:} $\cal{B}$ samples a message $\Gamma$ such that $\Pr[\Gamma=1] = \cal{G}($all queries $\SK_\ID)$ where $\cal{G}$ is defined in \cite[Lemma 28]{ABB10-EuroCrypt}. If $\Gamma=1$, $\cal{B}$ overwrites $b'$ with a fresh random bit and make a artificial abort.
				\end{enumerate}    
			\end{itemize}
			
			\item[Game 4.] We choose $A$ is a uniform random matrix in $\ZZ_q^{n\times m}$. However, matrices $A', B$ and their trapdoor $T_{A'}, T_B$ are generated through $\TrapGen(q,n)$, where $T_{A'}, T_B$ are basis of $\Lp_q(A'), \Lp_q(B)$ respectively. The construction of $A_i$ for $i=1,\cdots,\ell$ remains the same: $A_i=AR_i^*-h_iB$. When $\cal{A}$ queries $\cal{O}^{\Ext}(\ID)$ where $\ID=(\id_1,\cdots,\id_\ell)$, the challenger generate secret key using trapdoor $T_B$ as follows:
			\begin{itemize}
				\item Challenger $\cal{B}$ sets
				$$F_\ID:=(A|B+\sum_{i=1}^\ell \id_iA_i) = (A|AR_{\ID} + h_{\ID} B)$$
				where
				\begin{equation}\label{eq:R and h}
					R_\ID\gets\sum_{i=1}^\ell \id_iR_i^*\in\ZZ_q^{m\times m}\quad\text{and}\quad h_\ID\gets 1+\sum_{i=1}^\ell \id_ih_i\in\ZZ_q.
				\end{equation}
				\item If $h_\ID=0$ then abort the game and pretend that the adversary outputs a random bit in $\{0,1\}$ as in Game 3.
				\item Sample 
				$$E_\ID\gets\SampleBasisRight(A,h_\ID B,R_\ID,T_B,\sigma)$$
				Since $h_\ID$ is non-zero, $T_B$ is also a trapdoor for $h_{\ID} B$. Hence the output $E_\ID$ satisfies $F_\ID\cdot E_{\ID} = 0$ in $\ZZ_q$ and $E_{\ID}$ is basis of $\Lambda_q^\perp(F_\ID)$. Moreover, Theorem~\ref{thm:Gauss} shows that  $\sigma>\|\widetilde{T_B}\|s_{R_\ID}\omega(\sqrt{\log m})$ with $s_{R_\ID}:=\|R_\ID\|$
				\item Finally $\cal{B}$ return $\SK=(E_\ID, E'_\ID)$ to $\cal{A}$ where $E'_\ID$ is computed as in construction
				$$E'_{\ID}\gets\SampleBasisLeft(A',A_{\ID},T_{A'},\sigma)$$
			\end{itemize}
			
			The rest of Game 4 is similar to Game 3. In particular, $\cal{B}$ uses abort check in challenge phase and artificial abort in guess phase. Then Game 4 and Game 3 are identical in the adversary's view.
			
			\item[Game 5.] Game 5 is identical to Game 4, except that the challenge ciphertext is always chosen as a random independent element. And thus $\cal{A}$'s advantage is always $0$. The remaining part is to show Game 4 and Game 5 are computationally indistinguishable. If the abort event happens then the games are clearly indistinguishable. Hence, we only only focus on sequences of queries that do not cause an abort.\\
			
			\item[Reduction from LWE.] Recall that an LWE problem instance is provided as a sampling oracle $\cal{O}$ that can be either truly random $\cal{O}_\$$ or a noisy pseudo-random $\cal{O}_s$ for some secret random $s\in\ZZ_q^n$. Suppose now $\cal{A}$ has a non-negligible advantage in distinguishing Game 4 and Game 5, we use $\cal{A}$ to construct $\cal{B}$ to solve the LWE problem as follows.
			
			\item[Instance.] First of all, $\cal{B}$ requests from $\cal{O}$ and receives, for each $j=1,\cdots, t$ a fresh pair of $(\bf{u}_i,d_i)\in\ZZ_q^n\times\ZZ_q$ and for each $i=1,\cdots,m$, a fresh pair of $(\bf{a}_i,v_i)\in\ZZ_q^n\times\ZZ_q$.
			\item[Setup.] $\cal{B}$ constructs the public parameter $\PP$ as follows:
			
			\begin{enumerate}
				\item Assemble the random matrix $A\in\ZZ_q^{n\times m}$ from $m$ of previously given LWE samples by letting the $i$-th column of $A$ to be the $n$-vector $\bf{a}_i$ for all $i=1,\cdots,m$.
				\item Assemble the first $t$ unused  LWE samples $\bf{u}_1,\cdots,\bf{u}_t$ to become a public random matrix $U\in\ZZ_q^{n\times t}$.
				\item Matrix $A', B$ and trapdoor $T_{A'}, T_B$ are generated as in Game 4 while matrices $A_i$ for $i=1,\cdots,\ell$ are constructed as in Game 1.
				\item Set $\PP:=(A,A',A_1,\cdots,A_\ell,B,U)$ and send to $\cal{A}$.
			\end{enumerate}
			\item[Queries.] $\cal{B}$ answers the queries as in Game 4, including aborting the game.    
			\item[Challenge.]    
			Now when $\cal{A}$ sends $\cal{B}$ two messages $\bf{m}_0$ and $\bf{m}_1$ and a target identity $\ID^*=(\id^*_1,\cdots,\id^*_\ell)$. $\cal{B}$ choose a random bit $b\in\{0,1\}$ and computes the challenge ciphertext  $\CT_{\ID^*}^* = (R^*, \c_1^*, \c_2^*, \c_3^*, \c_4^*,\c_5^*)$. for $\bf{m}_b$ as follows:
			\begin{enumerate}
				\item Assemble $\d^*=[d_1,\cdots,d_t]^T\in\ZZ_q^t$, set
				$$\c_1^*\gets\d^*+\bf{m}_b\lfloor\frac{q}{2}\rfloor\in\ZZ_q^t$$
				\item Choose uniformly random $\s_2\in\ZZ_q^n$ and $\x_2\gets\overline{\Psi}_\alpha^t$, compute
				$$\c_2^* = U^T\bf{s}_2 +\bf{x}_2 +H(\bf{m}_b)\big\lfloor\frac{q}{2}\big\rfloor \in\ZZ_q^t$$
				\item Compute $R^*_{\ID^*}:=\sum_{i=1}^\ell \id_i^*R_i^*\in\{-\ell,\cdots,\ell\}^{m\times m}$.
				\item Choose uniformly at random $R^*\in\{-\ell,\cdots, \ell\}^{m\times m}$.
				\item Assemble $\v^*=[v_1,\cdots,v_m]^T\in\ZZ_q^m$. Set
				$$\c_3^*:=\left[
				\begin{array}{c}
					\v^* \\
					(R^*_{\ID^*})^T \v^*\\
					(R^*)^T \v^*
				\end{array}
				\right] \in\ZZ_q^{3m}$$
				\item Choose $\bf{y}_2\gets\overline{\Psi}_\alpha^m$ and set
				$$\c_4^*:=\left[
				\begin{array}{c}
				A'^T\bf{s}_2+\bf{y}_2 \\
				(AR^*_{\ID^*})^T\bf{s}_2+(R^*_{\ID^*})^T\bf{y}_2\\
				(AR^*)^T\bf{s}_2+(R^*)^T\bf{y}_2
				\end{array}\right]\in\ZZ_q^{3m}$$
				\item Compute $\c_5^*=H'(R^*||\c_1^*||\c_2^*||\c_3^*||\c_4^*)$.
			\end{enumerate}
			Then $\cal{B}$ sends $\CT_{\ID^*}^* = (R^*, \c_1^*, \c_2^*, \c_3^*, \c_4^*, \c_5^*)$ to $\cal{A}$.\\
			
			We argue that when the LWE oracle is pseudorandom (i.e. $\cal{O} = \cal{O}_s$) then $\CT_{\ID^*}^*$ is distributed exactly as in Game 4. It suffices to argue only in case of no abort. Note that, since one has $h_{\ID^*}=0$ and so
			$$F_{\ID^*}=(A|AR^*_{\ID^*})\quad,\quad F^*_1=(A|AR^*_{\ID^*}|AR^*)$$
			Because oracle is pseudorandom, then we have $\v^*=A^T\bf{s}_1+\bf{y}_1$ for some random noise vector $\bf{y}_1\gets\overline{\Psi}_\alpha^m$ respectively. Therefore, $\c_3^*$ in Step 5 satisfies:
			$$\c_3^*:=\left[
			\begin{array}{c}
				A^T\bf{s}_1+\bf{y}_1 \\
				(R^*_{\ID^*})^T (A^T\bf{s}_1+\bf{y}_1)\\
				(R^*)^T (A^T\bf{s}_1+\bf{y}_1)
			\end{array}
			\right]
			=\left[
			\begin{array}{c}
				A^T\bf{s}_1+\bf{y}_1 \\
				(AR^*_{\ID^*})^T\bf{s}_1+(R^*_{\ID^*})^T\bf{y}_1\\
				(AR^*)^T\bf{s}_1 +(R^*)^T\bf{y}_1
			\end{array}\right]
			=(F_1^*)^T\bf{s}_1+\left[
			\begin{array}{c}
				\bf{y}_1\\
				(R^*_{\ID^*})^T\bf{y}_1\\
				(R^*)^T\bf{y}_1
			\end{array}\right]$$
			
			One can easily see that
			$$\c_4^*:=(F_2^*)^T\bf{s}_2+\left[
			\begin{array}{c}
			\bf{y}_2\\
			(R^*_{\ID^*})^T\bf{y}_2\\
			(R^*)^T\bf{y}_2
			\end{array}\right]$$
			
			Moreover, for some $\bf{x}_1,\gets\overline{\Psi}_\alpha^t$, we have $\d^*=U^T\bf{s}_1+\bf{x}_1$, therefore
			$$\c_1^* = U^T\bf{s}_1 +\bf{x}_1 +\bf{m}_b\big\lfloor\frac{q}{2}\big\rfloor \in\ZZ_q^t$$
			Therefore $\CT_{\ID^*}^*$ is a valid ciphertext.
			
			When $\cal{O}=\cal{O}_{\$}$ we have that $\bf{d}^*$ is uniform in $\ZZ_q^t$ and $\v^*$ is uniform in $\ZZ_q^m$. Then obviously $\c_1^*$ is uniform. It follows also from the leftover hash lemma (cf.~{\cite[Theorem 8.38]{Shoup}}) that $\c_3^*$ is also uniform and independence in $\ZZ_q^3m$. Consequently, the challenge ciphertext $\CT_{\ID^*}^*$ is always uniform in $\ZZ_q^{m \times m} \times \ZZ_q^{2t+6m} \times \{0,1\}^\lambda$.\\
			
			\item[Guess.] After making additional queries in Phase 2, $\cal{A}$ guesses if it is interacting with Game 4 or Game 5. The simulator also implements the artificial abort for these Games and uses the final guess as the answer to the LWE problem.
		\end{description}
		
		We claim that when $\cal{O}=\cal{O}_\s$ then the adversary's view is as in Game 4. When $\cal{O}=\cal{O}_\$$ then the view of adversary is as in Game 5. Hence $\cal{B}$'s advantage in solving the LWE problem is the same as the advantage of $\cal{A}$ in distinguishing Game 4 and Game 5 as required. 
		
	\end{proof}
	
	
	
	\begin{theorem}\label{thm:OW}
		The IBEET-FA system with parameters $(q,n,m,\sigma,\alpha)$ as in Section~\eqref{sec:params} is  $\OW$ secure provided that $H$ is a one-way hash function and the $(\ZZ_q,n,\bar\Psi_\a)$-LWE assumption holds.
	\end{theorem}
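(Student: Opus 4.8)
The plan is to show that a Type-I adversary $\cal{A}$ gains no information about the challenge message $\bf{m}$ beyond its hash value $H(\bf{m})$, so that recovering $\bf{m}$ forces $\cal{A}$ either to break the $(\ZZ_q,n,\bar\Psi_\a)$-LWE assumption on the ``message component'' $(\c_1,\c_3)$ of the challenge ciphertext, or to invert the one-way function $H$ on $H(\bf{m})$. As noted in the introduction, one cannot simply reduce \OW{} to \IND{} here, because the \OW{} adversary is explicitly given the trapdoor $\td_{1,\ID^*}=E'_{\ID^*}$ (and the derived $\td_{2,\ID^*},\td_{3,\ID^*}$) while the \IND{} adversary is not, so the reduction must be arranged so that releasing $E'_{\ID^*}$ is harmless. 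Concretely, I would split the two public matrices asymmetrically: generate $A',B$ together with short bases $T_{A'},T_B$ via $\TrapGen(q,n)$, assemble $A$ from LWE samples exactly as in the reduction inside the proof of Theorem~\ref{thm:IND}, and set $A_i:=AR_i^*-h_iB$ for random $R_i^*\in\{-1,1\}^{m\times m}$ and random $h_i\in\ZZ_q$, so that $F_\ID=(A\,|\,AR_\ID+h_\ID B)$ and $F'_\ID=(A'\,|\,AR_\ID+h_\ID B)$ with $R_\ID,h_\ID$ as in~\eqref{eq:R and h}. Then each $\cal{O}^{\Ext}(\ID)$ query (allowed only for $\ID\neq\ID^*$) is answered by $\SampleBasisRight(A,h_\ID B,R_\ID,T_B,\sigma)$ for the $E_\ID$ part, aborting when $h_\ID=0$ with exactly the abort and artificial-abort bookkeeping of \cite[Lemma 28]{ABB10-EuroCrypt}, and by $\SampleBasisLeft(A',A_\ID,T_{A'},\sigma)$ for the $E'_\ID$ part; and every trapdoor query $\cal{O}^{\Td_\alpha}$, \emph{including on $\ID^*$}, is answered using $T_{A'}$ alone (first $E'_\ID\gets\SampleBasisLeft(A',A_\ID,T_{A'},\sigma)$, then $\SampleLeft$ on $F'_\ID$ as in the construction), which never needs $E_\ID$. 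For the target identity one forces $h_{\ID^*}=0$, so that $F_1^*=(A\,|\,AR^*_{\ID^*}\,|\,AR^*)$ carries no $B$-component and the message-side LWE embedding goes through verbatim as in the proof of Theorem~\ref{thm:IND}.

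With this setup I would run a hybrid sequence mirroring Game~0 through Game~5 of Theorem~\ref{thm:IND}, altering only the pair $(\c_1^*,\c_3^*)$ of the challenge ciphertext: $\c_1^*:=\d^*+\bf{m}\lfloor q/2\rfloor$ built from LWE samples $d_1,\dots,d_t$, and $\c_3^*:=[\v^*\,|\,(R^*_{\ID^*})^T\v^*\,|\,(R^*)^T\v^*]^T$ built from LWE samples $v_1,\dots,v_m$, while $\c_2^*,\c_4^*$ are produced honestly as an encryption of $H(\bf{m})$ (these never touch the LWE oracle, hence are identically distributed whether the oracle is pseudorandom or truly random) and $\c_5^*=H'(R^*\|\c_1^*\|\c_2^*\|\c_3^*\|\c_4^*)$. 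When the oracle is $\cal{O}_\s$ the pair $(\c_1^*,\c_3^*)$ is a well-formed encryption of $\bf{m}$ (the verification carried out in the challenge step of the \IND{} reduction applies unchanged); when it is $\cal{O}_\$$, $\c_1^*$ is uniform and, by the leftover hash lemma, $\c_3^*$ is uniform and independent, so in the final hybrid $\bf{m}$ survives in $\cal{A}$'s view only through $H(\bf{m})$ inside $(\c_2^*,\c_4^*)$, which $\cal{A}$ can of course decrypt with its legitimately obtained $\td_{1,\ID^*}$. Hence any gap between the two hybrids yields an LWE distinguisher, and any residual \OW{} advantage in the final hybrid yields an inverter for $H$, obtained by a wrapper that receives an external one-wayness challenge $\bf{y}$, plants it as $\c_2^*:=U^T\bf{s}_2+\bf{x}_2+\bf{y}\lfloor q/2\rfloor$ with everything else simulated as above, and outputs whatever $\bf{m}'$ the adversary returns (a preimage of $\bf{y}$ whenever $\cal{A}$ wins). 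Combining, $\cal{A}$'s \OW{} advantage is at most $O(1)\cdot\Adv^{\mathsf{LWE}}_{\cal{B}}$ plus the inversion advantage against $H$ plus a negligible statistical term.

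The step I expect to be the main obstacle is checking that this asymmetric simulation is actually consistent: one must verify that handing $\cal{A}$ a full short basis $E'_{\ID^*}$ of $\Lp_q(F'_{\ID^*})$ leaks nothing about the LWE secret $\bf{s}_1$ embedded into $(\c_1^*,\c_3^*)$ --- which holds because $E'_{\ID^*}$ is sampled independently of the challenge ciphertext and lives entirely on the $A'$-side whereas the embedding sits on the $A$-side --- and that the abort and artificial-abort analysis still makes the abort event independent of $\cal{A}$'s view even though $\cal{A}$ now also queries $\cal{O}^{\Td_\alpha}$ on $\ID^*$ (these queries place no new constraint on the hidden scalars $h_i$, being answered purely from $T_{A'}$). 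A secondary but essential point, already flagged in the remark accompanying the \OW{} game, is that the reduction to one-wayness of $H$ is meaningful only when the message space is super-polynomially large with sufficiently high min-entropy; otherwise $\cal{A}$ can brute-force $\bf{m}$ from $H(\bf{m})$ using $\td_{1,\ID^*}$, independently of LWE or $H$. The remaining ingredients --- the precise hybrid definitions, the statistical closeness of the $A_i$ to uniform via \cite[Lemma 13]{ABB10-EuroCrypt}, and the parameter inequalities for $\SampleLeft$ and $\SampleBasisRight$ from Theorem~\ref{thm:Gauss} --- are exactly those already carried out in the proof of Theorem~\ref{thm:IND}.
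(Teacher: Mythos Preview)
Your proposal is correct and follows essentially the same route as the paper: the same asymmetric trapdoor setup ($A$ assembled from LWE samples, $A',B$ via $\TrapGen$ with bases $T_{A'},T_B$), the same handling of $\cal{O}^{\Ext}$ via $\SampleBasisRight$ with $T_B$ on the $E_\ID$ side and $\SampleBasisLeft$ with $T_{A'}$ on the $E'_\ID$ side, the same answering of all $\cal{O}^{\Td_\alpha}$ queries (including on $\ID^*$) purely from $T_{A'}$, and the same LWE embedding into $(\c_1^*,\c_3^*)$ while $(\c_2^*,\c_4^*)$ are generated honestly.

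One point worth noting: you are actually more careful than the paper on the final step. The paper's Game~4 asserts that once the challenge ciphertext is ``random independent'' the adversary's advantage is zero, but in the reduction only $(\c_1^*,\c_3^*)$ become uniform when $\cal{O}=\cal{O}_\$$, whereas $(\c_2^*,\c_4^*)$ remain an honest encryption of $H(\bf{m}^*)$ that $\cal{A}$ can decrypt with its legitimately obtained $\td_{1,\ID^*}$. You make this residual leakage explicit and close the argument with a separate wrapper that turns any remaining \OW{} advantage into an inverter for $H$; this is precisely where the one-wayness hypothesis in the theorem statement is consumed, and the paper leaves that step implicit. Your observation that the $\cal{O}^{\Td_\alpha}$ queries on $\ID^*$ place no constraint on the hidden $h_i$ (so the abort analysis of \cite[Lemma 28]{ABB10-EuroCrypt} carries over unchanged) is also a point the paper states but does not justify.
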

	
	\begin{proof}
		The proof is similar to that of~\cite[Theorem 25]{ABB10-EuroCrypt}. Assume that there is a Type-I adversary $\cal{A}$ who breaks the $\OW$ security of the IBEET-FA scheme. We construct an algorithm $\cal{B}$ who solves the LWE problem using $\cal{A}$. The proof proceeds in a sequence of games where the first game is identical to the $\OW$.
		
		\begin{description}
			\item[Game 0.] This is the original $\OW$ game between the attacker $\cal{A}$ against the scheme and the $\OW$ challenger. Recall that in this Game 0 the challenger generates public parameter PP by choosing $\ell + 3$ random matrices $A, A', A_1, \cdots, A_\ell, B$ in $\ZZ_q^{n \times m}$. Let $R^*_i\in\{-1,1\}^{m\times m}$ for $i=1,\cdots,\ell$ be the ephemeral random matrices generated when creating the ciphertext $\CT_{\ID^*}^*$.\\
			
			\item[Game 1.] At setup phase, the challenger $\cal{B}$ chooses $\ell$ uniform random matrices $R_i^*$, and $\ell$ random scalars $h_i\in\ZZ_q$ for $i=1,\cdots,\ell$. After that, it generates $A, A', B$ as in Game 0 and constructs the matrices $A_i$ for $i=1,\cdots,\ell$ as
			$$A_i\gets A\cdot R^*_i-h_i\cdot B\in\ZZ_q^{n\times m}.$$
			
			The remainder of Game 1 is unchanged with $R_i^*$, $i=1,\cdots, \ell$ is used to generate the challenge ciphertext. Note that these $R^*_i\in\{-1,1\}^{m\times m}$ are chosen in advance in setup phase without the knowledge of $\ID^*$. Using~{\cite[Lemma 13]{ABB10-EuroCrypt}}, we prove that the matrices $A_i$ are statistically close to uniform random independent. Therefore, the adversary's view in Game 1 is identical to that in Game 0.\\
			
			\item[Game 2.] We add an abort that is independent of adversary's view as follow:
			\begin{itemize}
				\item In the setup phase, the challenger chooses random $h_i\in\ZZ_q$, $i=1,\cdots, \ell$ and keeps these values private.
				\item In the final guess phase, the adversary outputs a guest $\bf{m}'$ for $\bf{m}$. The challenger now does the following:
				\begin{enumerate}
					\item \textbf{Abort check:} for all queries $\SK_\ID$ to the extract secret key oracle $\cal{O}^\Ext$, the challenger checks whether the identity $\ID=(\id_1,\cdots,\id_\ell)$ satisfies $1+\sum_{i=1}^\ell\id_ih_i\ne 0$ and $1+\sum_{i=1}^\ell\id^*_ih_i= 0$. If not then $\cal{B}$ overwrites $\bf{m}'$ with a fresh random message and aborts the game. Note that this is unnoticed from the adversary's view and $\cal{B}$ can even abort the game as soon as the condition is true.
					\item \textbf{Artificial abort:} $\cal{B}$ samples a message $\Gamma$ such that $\Pr[\Gamma=1] = \cal{G}($all queries $\SK_\ID)$ where $\cal{G}$ is defined in \cite[Lemma 28]{ABB10-EuroCrypt}. If $\Gamma=1$, $\cal{B}$ overwrites $\bf{m}'$ with a fresh message and make a artificial abort.
				\end{enumerate}    
			\end{itemize}
			This completes the description of Game 2.\\
			
			\item[Game 3.] We choose $A$ is a uniform random matrix in $\ZZ_q^{n\times m}$. However, matrices $A', B$ and their trapdoor $T_{A'}, T_B$ are generated through $\TrapGen(q,n)$, where $T_{A'}, T_B$ are basis of $\Lp_q(A'), \Lp_q(B)$ respectively. The construction of $A_i$ for $i=1,\cdots,\ell$ remains the same: $A_i=AR_i^*-h_iB$. When $\cal{A}$ queries $\cal{O}^{\Ext}(\ID)$ where $\ID=(\id_1,\cdots,\id_\ell)$, the challenger generate secret key using trapdoor $T_B$ as follows:
			\begin{itemize}
				\item Challenger $\cal{B}$ sets
				$$F_\ID:=(A|B+\sum_{i=1}^\ell \id_iA_i) = (A|AR_{\ID} + h_{\ID} B)$$
				where
				\begin{equation}\label{eq:R and h}
					R_\ID\gets\sum_{i=1}^\ell \id_iR_i^*\in\ZZ_q^{m\times m}\quad\text{and}\quad h_\ID\gets 1+\sum_{i=1}^\ell \id_ih_i\in\ZZ_q.
				\end{equation}
				\item If $h_\ID=0$ then abort the game and pretend that the adversary outputs a random message $\bf{m}'$ as in Game 2.
				\item Sample 
				$$E_\ID\gets\SampleBasisRight(A,h_\ID B,R_\ID,T_B,\sigma)$$
				Since $h_\ID$ is non-zero, $T_B$ is also a trapdoor for $h_{\ID} B$. Hence the output $E_\ID$ satisfies $F_\ID\cdot E_{\ID} = 0$ in $\ZZ_q$ and $E_{\ID}$ is basis of $\Lambda_q^\perp(F_\ID)$. Moreover, Theorem~\ref{thm:Gauss} shows that when $\sigma>\|\widetilde{T_B}\|s_{R_\ID}\omega(\sqrt{\log m})$ with $s_{R_\ID}:=\|R_\ID\|$
				\item Finally $\cal{B}$ return $\SK=(E_\ID, E'_\ID)$ to $\cal{A}$ where $E'_\ID$ is computed as in the real construction
				$$E'_{\ID}\gets\SampleBasisLeft(A',A_{\ID},T_{A'},\sigma)$$
			\end{itemize}
			Note that when $\cal{A}$ queries $\cal{O}^{\Td_\alpha}$ (including $\ID^*$), the trapdoor is still computed as in the real construction.
			
			The rest of Game 3 is similar to Game 2. In particular, $\cal{B}$ uses abort check in challenge phase and artificial abort in guess phase. Then Game 3 and Game 2 are identical in the adversary's view.\\
			
			\item[Game 4.] Game 4 is identical to Game 3, except that the challenge ciphertext is always chosen as a random independent element. And thus $\cal{A}$'s advantage is always $0$. The remaining part is to show Game 3 and Game 4 are computationally indistinguishable. If the abort event happens then the games are clearly indistinguishable. Hence, we only focus on sequences of queries that do not cause an abort.\\
			
			\item[Reduction from LWE.] Recall that an LWE problem instance is provided as a sampling oracle $\cal{O}$ that can be either truly random $\cal{O}_\$$ or a noisy pseudo-random $\cal{O}_s$ for some secret random $s\in\ZZ_q^n$. Suppose now $\cal{A}$ has a non-negligible advantage in distinguishing Game 3 and Game 4, we use $\cal{A}$ to construct $\cal{B}$ to solve the LWE problem as follows.
			
			\item[Instance.] First of all, $\cal{B}$ requests from $\cal{O}$ and receives, for each $j=1,\cdots, t$ a fresh pair of $(\bf{u}_i,d_i)\in\ZZ_q^n\times\ZZ_q$ and for each $i=1,\cdots,m$, a fresh pair of $(\bf{a}_i,v_i)\in\ZZ_q^n\times\ZZ_q$.
			\item[Setup.] $\cal{B}$ constructs the public parameter $\PP$ as follows:
			
			\begin{enumerate}
				\item Assemble the random matrix $A\in\ZZ_q^{n\times m}$ from $m$ of previously given LWE samples by letting the $i$-th column of $A$ to be the $n$-vector $\bf{a}_i$ for all $i=1,\cdots,m$.
				\item Assemble the first $t$ unused  LWE samples $\bf{u}_1,\cdots,\bf{u}_t$ to become a public random matrix $U\in\ZZ_q^{n\times t}$.
				\item Matrix $A', B$ and trapdoor $T_{A'}, T_B$ are generated as in Game 4 while matrices $A_i$ for $i=1,\cdots,\ell$ are constructed as in Game 1.
				\item Set $\PP:=(A,A',A_1,\cdots,A_\ell,B,U)$ and send to $\cal{A}$.
			\end{enumerate}
			\item[Queries.] $\cal{B}$ answers the queries as in Game 4, including aborting the game.    
			\item[Challenge.]    
			Now when $\cal{A}$ send $\cal{B}$ a target identity $\ID^*=(\id^*_1,\cdots,\id^*_\ell)$. $\cal{B}$ then choose a random message $\m^*$ and computes the challenge ciphertext  $\CT_{\ID^*}^* = (R^*, \c_1^*, \c_2^*, \c_3^*, \c_4^*, \c_5^*)$ as follows:
			\begin{enumerate}
				\item Assemble $\d^*=[d_1,\cdots,d_t]^T\in\ZZ_q^t$, set
				$$\c_1^*\gets\d^*+\bf{m}^*\lfloor\frac{q}{2}\rfloor\in\ZZ_q^t$$
				\item Choose uniformly random $\s_2\in\ZZ_q^n$ and $\x_2\gets\overline{\Psi}_\alpha^t$, compute
				$$\c_2^* = U^T\bf{s}_2 +\bf{x}_2 +H(\bf{m}^*)\big\lfloor\frac{q}{2}\big\rfloor \in\ZZ_q^t$$
				\item Compute $R^*_{\ID^*}:=\sum_{i=1}^\ell \id_i^*R_i^*\in\{-\ell,\cdots,\ell\}^{m\times m}$.
				\item Choose uniformly at random $R^*\in\{-\ell,\cdots, \ell\}^{m\times m}$.
				\item Assemble $\v^*=[v_1,\cdots,v_m]^T\in\ZZ_q^m$. Set
				$$\c_3^*:=\left[
				\begin{array}{c}
				\v^* \\
				(R^*_{\ID^*})^T \v^*\\
				(R^*)^T \v^*
				\end{array}
				\right] \in\ZZ_q^{3m}$$
				\item Choose $\bf{y}_2\gets\overline{\Psi}_\alpha^m$ and set
				$$\c_4^*:=\left[
				\begin{array}{c}
				A'^T\bf{s}_2+\bf{y}_2 \\
				(AR^*_{\ID^*})^T\bf{s}_2+(R^*_{\ID^*})^T\bf{y}_2\\
				(AR^*)^T\bf{s}_2+(R^*)^T\bf{y}_2
				\end{array}\right]\in\ZZ_q^{3m}$$
				\item Compute $\c_5^*=H'(R^*||\c_1^*||\c_2^*||\c_3^*||\c_4^*)$.
			\end{enumerate}
			Then $\cal{B}$ sends $\CT_{\ID^*}^* = (R^*, \c_1^*, \c_2^*, \c_3^*, \c_4^*, \c_5^*)$ to $\cal{A}$.\\
			
			We argue that when the LWE oracle is pseudorandom (i.e. $\cal{O} = \cal{O}_s$) then $\CT_{\ID^*}^*$ is distributed exactly as in Game 4. It suffices to argue only in case of no abort. Note that, since one has $h_{\ID^*}=0$ and so
			$$F_{\ID^*}=(A|AR^*_{\ID^*})\quad,\quad F^*_1=(A|AR^*_{\ID^*}|AR^*)$$
			Because oracle is pseudorandom, then we have $\v^*=A^T\bf{s}_1+\bf{y}_1$ for some random noise vector $\bf{y}_1\gets\overline{\Psi}_\alpha^m$ respectively. Therefore, $\c_3^*$ in Step 5 satisfies:
			$$\c_3^*:=\left[
			\begin{array}{c}
			A^T\bf{s}_1+\bf{y}_1 \\
			(R^*_{\ID^*})^T (A^T\bf{s}_1+\bf{y}_1)\\
			(R^*)^T (A^T\bf{s}_1+\bf{y}_1)
			\end{array}
			\right]
			=\left[
			\begin{array}{c}
			A^T\bf{s}_1+\bf{y}_1 \\
			(AR^*_{\ID^*})^T\bf{s}_1+(R^*_{\ID^*})^T\bf{y}_1\\
			(AR^*)^T\bf{s}_1 +(R^*)^T\bf{y}_1
			\end{array}\right]
			=(F_1^*)^T\bf{s}_1+\left[
			\begin{array}{c}
			\bf{y}_1\\
			(R^*_{\ID^*})^T\bf{y}_1\\
			(R^*)^T\bf{y}_1
			\end{array}\right]$$
			
			One can easily see that
			$$\c_4^*:=(F_2^*)^T\bf{s}_2+\left[
			\begin{array}{c}
			\bf{y}_2\\
			(R^*_{\ID^*})^T\bf{y}_2\\
			(R^*)^T\bf{y}_2
			\end{array}\right]$$
			
			Moreover, for some $\bf{x}_1,\gets\overline{\Psi}_\alpha^t$, we have $\d^*=U^T\bf{s}_1+\bf{x}_1$, therefore
			$$\c_1^* = U^T\bf{s}_1 +\bf{x}_1 +\bf{m}^*\big\lfloor\frac{q}{2}\big\rfloor \in\ZZ_q^t$$
			Therefore $\CT_{\ID^*}^*$ is a valid ciphertext.
			
			When $\cal{O}=\cal{O}_{\$}$ we have that $\bf{d}^*$ is uniform in $\ZZ_q^t$ and $\v^*$ is uniform in $\ZZ_q^m$. Then obviously $\c_1^*$ is uniform. It follows also from the leftover hash lemma (cf.~{\cite[Theorem 8.38]{Shoup}}) that $\c_3^*$ is also uniform and independence in $\ZZ_q^3m$. Consequently, the challenge ciphertext $\CT_{\ID^*}^*$ is always uniform in $\ZZ_q^{m \times m} \times \ZZ_q^{2t+6m} \times \{0,1\}^\lambda$.\\
			
			\item[Guess.] After making additional queries in Phase 2, $\cal{A}$ guesses if it is interacting with Game 3 or Game 4. The simulator also implements the artificial abort for these Games and uses the final guess as the answer to the LWE problem.
		\end{description}
		
		We claim that when $\cal{O}=\cal{O}_\s$ then the adversary's view is as in Game 3. When $\cal{O}=\cal{O}_\$$ then the view of adversary is as in Game 4. Hence $\cal{B}$'s advantage in solving the LWE problem is the same as the advantage of $\cal{A}$ in distinguishing Game 3 and Game 4 as required. This completes the description of algorithm $\cal{B}$ and completes the proof.
	\end{proof}
	
	
	\subsection{Setting Parameters}\label{sec:params}
	Following~{\cite[Section 7.3]{ABB10-EuroCrypt}}, we choose our parameters satisfying:
	\begin{itemize}
		\item that the $\TrapGen$ works, i.e., $m>6n\log q$.
		\item that $\sigma$ is large enough for $\SampleLeft$, $\SampleRight$, $\SampleBasisLeft$ and $\SampleBasisRight$ to work, i.e.,
		$\sigma>\max \{ \|\widetilde{T_A}\|\cdot \omega(\sqrt{\log (2m)}), \|\widetilde{T_B}\|\cdot s_{R_\ID}\cdot \omega(\sqrt{\log m})\}$. Note that, when $R$ is a random matrix in $\{-1,1\}^{m\times m}$ then $s_R<O(\sqrt{m})$ with overwhelming probability (cf.~{\cite[Lemma 15]{ABB10-EuroCrypt}}). Hence when $R_\ID$ is a random matrix in $\{-\ell,\ell\}^{m\times m}$ then $s_{R_\ID}<O(\ell\sqrt{m})$. Also, note that $\|\widetilde{T_A}\|, \|\widetilde{T_B}\| \leq O(\sqrt{n\log q})$.
		\item that Regev's reduction for the LWE problem to apply, i.e., $q>2\sqrt{n}/\alpha$.
		\item that our security reduction applies (i.e., $q>2Q$ where $Q$ is the number of identity queries from the adversary).
		
			\item the error term in decryption is less than ${q}/{5}$ with high probability, i.e., $q=\Omega(\sigma m^{3/2})$ and $\alpha<[\sigma lm\omega(\sqrt{\log m})]^{-1}$.
	\end{itemize}

	\section{Conclusion}
	In this paper, we propose a direct construction of IBEET-FA based on the hardness of Learning With Errors problem. This construction inherited the efficiency from direct construction of PKEET by Duong et al~{\cite{PKEET-FA-Duong20,IBEET-Duong19}}. However, in order to support flexible authorization, the ciphertext size and secret key size increase. One can improve it by using elegant methods in construction and security proof. We will leave as a future work for improving our schemes to achieve the CCA2-security as well as reducing the storage size.Additionally, as mentioned earlier, the existing security model for IBEET and hence for IBEET-FA is not sufficiently strong in some scenarios. Therefore, finding  a stronger security model for IBEET and IBEET-FA would also be an interesting future work.

\subsubsection{Acknowledgment.} 
We all thank Ryo Nishimaki and anonymous reviewers for their insightful comments which improve the content and presentation of  this work.  This work is partially supported by the Australian Research Council Linkage Project LP190100984. Huy Quoc Le has been sponsored by a CSIRO Data61 PhD Scholarship and CSIRO Data61 Top-up Scholarship. 
	


\begin{thebibliography}{99}
		\bibliographystyle{acm}
		
		\bibitem{ABB10-EuroCrypt}
		Shweta Agrawal, Dan Boneh, and Xavier Boyen.
		\newblock Efficient lattice {(H)IBE} in the standard model.
		\newblock In {\em Advances in Cryptology - {EUROCRYPT} 2010, 29th Annual
			International Conference on the Theory and Applications of Cryptographic
			Techniques, Monaco / French Riviera, May 30 - June 3, 2010. Proceedings},
		pages 553--572, 2010.
		
		\bibitem{Ajtai99}
		Mikl{\'{o}}s Ajtai.
		\newblock Generating hard instances of the short basis problem.
		\newblock In {\em Automata, Languages and Programming, 26th International
			Colloquium, ICALP'99, Prague, Czech Republic, July 11-15, 1999, Proceedings},
		pages 1--9, 1999.
		
		\bibitem{AP09}
		Jo{\"{e}}l Alwen and Chris Peikert.
		\newblock Generating shorter bases for hard random lattices.
		\newblock In {\em 26th International Symposium on Theoretical Aspects of
			Computer Science, {STACS} 2009, February 26-28, 2009, Freiburg, Germany,
			Proceedings}, pages 75--86, 2009.
		
		\bibitem{BCHK07}
		Dan Boneh, Ran Canetti, Shai Halevi, and Jonathan Katz.
		\newblock Chosen-ciphertext security from identity-based encryption.
		\newblock {\em {SIAM} J. Comput.}, 36(5):1301--1328, 2007.
		
		\bibitem{CHKP10}
		David Cash, Dennis Hofheinz, Eike Kiltz, and Chris Peikert.
		\newblock Bonsai trees, or how to delegate a lattice basis.
		\newblock In {\em Advances in Cryptology - {EUROCRYPT} 2010, 29th Annual
			International Conference on the Theory and Applications of Cryptographic
			Techniques, Monaco / French Riviera, May 30 - June 3, 2010. Proceedings},
		pages 523--552, 2010.
		
		\bibitem{Duong19}
		Dung~Hoang Duong, Kazuhide Fukushima, Shinsaku Kiyomoto, Partha~Sarathi Roy,
		and Willy Susilo.
		\newblock A lattice based public key encryption with equality test in standard
		model.
		\newblock In {\em Information Security and Privacy - 24nd Australasian
			Conference, {ACISP} 2019, Christchurch, New Zealand, July 3-5, 2019,
			Proceedings}, pages 168--183, 2019.
		
		\bibitem{IBEET-Duong19}
		Dung~Hoang Duong, Huy Quoc Le, Partha~Sarathi Roy, and Willy Susilo.
		\newblock Lattice-Based IBE with Equality Test in Standard Model.
		\newblock In {\em Provable Security}, pages 19-40, 2019.
		
		\bibitem{PKEET-FA-Duong20}
		Dung~Hoang Duong, Kazuhide Fukushima, Shinsaku Kiyomoto, Partha~Sarathi Roy,
		and Willy Susilo.
		\newblock Lattice-based public key encryption with equality test
		supporting flexible authorization in standard model.
		\newblock In {\em ArXiv, 2005.05308}, 2020.
		
		\bibitem{Lee2016}
		Hyung~Tae Lee, San Ling, Jae~Hong Seo, Huaxiong Wang, and Taek-Young Youn.
		\newblock Public key encryption with equality test in the standard model.
		\newblock {\em Cryptology ePrint Archive}, Report 2016/1182, 2016.
		
		\bibitem{Lee16}
		Hyung~Tae Lee, San Ling, Jae~Hong Seo, and Huaxiong Wang.
		\newblock Semi-generic construction of public key encryption and identity-based
		encryption with equality test.
		\newblock {\em Inf. Sci.}, 373:419--440, 2016.
		
		\bibitem{IBEET-FA-Lee18}
		Hyung~Tae Lee, Huaxiong Wang, and Kai Zhang.
		\newblock Security analysis and modification of id-based encryption with
		equality test from {ACISP} 2017.
		\newblock In {\em Information Security and Privacy - 23rd Australasian
			Conference, {ACISP} 2018, Wollongong, NSW, Australia, July 11-13, 2018,
			Proceedings}, pages 780--786, 2018.
		
		\bibitem{IBEET-FA-Ma16}
		Sha Ma.
		\newblock Identity-based encryption with outsourced equality test in cloud
		computing.
		\newblock {\em Information Sciences}, 328:389--402, 2016.
		
		
			\bibitem{PKEET-FA-Maetal-16}
				Sha Ma,  Qiong Huang, Mingwu Zhang and Bo Yang.
				\newblock Efficient Public Key Encryption With Equality Test
				Supporting Flexible Authorization.
				\newblock {\em IEEE Transactions on Information Forensics and Security}, Volume: 10 , Issue: 3 , March 2015.
				
				
		\bibitem{MR04}
		Daniele Micciancio and Oded Regev.
		\newblock Worst-case to average-case reductions based on gaussian measures.
		\newblock In {\em 45th Symposium on Foundations of Computer Science {(FOCS}
			2004), 17-19 October 2004, Rome, Italy, Proceedings}, pages 372--381, 2004.
		
		\bibitem{Regev05}
		Oded Regev.
		\newblock On lattices, learning with errors, random linear codes, and
		cryptography.
		\newblock In {\em Proceedings of the 37th Annual {ACM} Symposium on Theory of
			Computing, Baltimore, MD, USA, May 22-24, 2005}, pages 84--93, 2005.
		
		\bibitem{Shor97}
		Peter~W. Shor.
		\newblock Polynomial-time algorithms for prime factorization and discrete
		logarithms on a quantum computer.
		\newblock {\em {SIAM} J. Comput.}, 26(5):1484--1509, 1997.
		
		\bibitem{Shoup}
		Victor Shoup.
		\newblock {\em A computational introduction to number theory and algebra}.
		\newblock Cambridge University Press, second edition, 2008.
		
		\bibitem{WZCH18}
		L. {Wu} and Y. {Zhang} and K. R. {Choo} and D. {He}.
		\newblock Efficient Identity-Based Encryption Scheme with Equality Test in Smart City.
		\newblock In IEEE Transactions on Sustainable Computing, 3(1): 44-53, 2018.
		
		\bibitem{WMMZ17}	Wu, T., Ma, S., Mu, Y., Zeng, S.
		\newblock Id-based encryption with
	equality test against insider attack. \newblock In: Pieprzyk, J., Suriadi,
 	S. (eds.) Information Security and Privacy. pp. 168–183.
	Springer International Publishing, Cham (2017).
		
		
		\bibitem{Ma16} Sha Ma.
			\newblock
			 Identity-based encryption with outsourced equality
test in cloud computing. 	\newblock Information Sciences 328, 389–402
 (2016).
		
	
		 
	\bibitem{AAHM18}	 Alornyo, S., Asante, M., Hu, X., Mireku, K.K. 	\newblock Encrypted
 Traffic Analytic using identity-based Encryption with Equality Test for Cloud Computing. 	\newblock In: 2018 IEEE 7th International
 Conference on Adaptive Science Technology (ICAST). pp. 1–
4 (2018)
		 
		
			
		\bibitem{IBEET-RSA}	Ramadan, M., Liao, Y., Li, F., Zhou, S., Abdalla, H. \newblock IBEETRSA: identity-based encryption with equality test over RSA
	for wireless body area networks. \newblock MONET 25(1), 223–
233 (2020). https://doi.org/10.1007/s11036-019-01215-9,.

	\bibitem{GH05} {David Galindo and Ichiro Hasuo}, \newblock Security Notions for Identity Based Encryption. \newblock Cryptology ePrint Archive, Report 2005/253, 2005, \url{https://eprint.iacr.org/2005/253}, 


		
		
	\end{thebibliography}
\end{document}